\def\BState{\State\hskip-\ALG@thistlm}
\newcommand{\Rp}{\mathbb{R}_+}
\newcommand{\N}{\mathbb{N}}
\newcommand{\pr}{\mathrm{pr}}
\newcommand{\R}{\mathcal{R}}
\newcommand{\Po}{\mathcal{P}}
\newcommand{\TES}[1]{\mathit{TES}(#1)}
\title{Runtime Composition Of Systems of Interacting Cyber-Physical Components}
\author{Benjamin Lion\inst{1}, Farhad Arbab\inst{1,2}, Carolyn Talcott\inst{3}}
 \institute{
            CWI, Amsterdam, The Netherlands \\ 
            \email{lion@cwi.nl} \and 
     Leiden University, Leiden, The Netherlands \\ 
     \email{arbab@cwi.nl}  \and 
    SRI International, CA, USA \\ 
    \email{carolyn.talcott@gmail.com}}
\begin{document}

\maketitle

\begin{abstract}
We introduce a transition system based specification of cyber-physical systems whose semantics is compositional with respect to a family of algebraic products.
We give sufficient conditions for execution of a product to be correctly implemented by a lazy expansion of the product construction.
The transition system algebra is implemented in the Maude rewriting logic system, and we report a simple case study illustrating compositional specification.
\end{abstract}

\section{Introduction}
We model a cyber-physical system as a pair consisting of an interface and a behavior. 
We call such pairs components.
The interface is a set of events and the behavior is a set of infinite sequences of observations called Timed-Event Stream (TES), where an observation is a pair of a set of events from the interface and a time stamp. 

In previous work,
we define a family of parametrized binary products to compose components. 
The parameters of a product consist of a composability relation on TESs, which says which pairs of TESs can compose, and a composition function, which says how two TESs compose to form a new TES.
For instance, the synchronous product of two components with shared events composes pairs of TESs, one from each component behavior, such that shared events occur at the same time. The resulting TES interleaves observations from a pair of composable TESs.
In order to bridge the gap between a denotational model and an operational model, we give
a co-inductive definition of composability relations on TESs as a lift of composability relations on observations. Such locally defined relations provide a step-wise mechanism to check whether two TESs are composable. 

In this paper, we give an operational mechanism to specify components by transition systems labeled with observations, called TES transition systems. Different TES transition systems may denote the same component, as a component is oblivious to internal non-determinism of the machinery that manifests its behavior.

As for components, we introduce a family of parametrized algebraic products on TES transition systems. The parameter here is a composability relation on observations, and each transition in the product is the result of the composition of a pair of transitions with composable labels. We show that the TES transition system component semantics is compositional with respect to such products, i.e., the component resulting from the product of two TES transition systems is equal to the product of the components resulting from each TES transition system. On components, the composability relation is co-inductively lifted from observations to TESs, and the composition function is set union on observations and interleaving on streams.

Because the composability relation on observations is a step-wise operation, it may lead to deadlock states in the product TES transition system, i.e., states with no outgoing transitions.
We call two TES transition systems \emph{compatible} with respect to a composability relation on observations if, for every reachable pair of states, there is at least one pair of transitions whose pair of labels is composable.
We give some sufficient conditions for TES transition systems to be compatible, and show that if two TES transitions system are compatible, then their product can be done lazily, i.e., step by step at runtime.

To carry out experiments, we have implemented TES transition systems in the Maude rewriting logic system.
We use this implementation to analyze properties of the behavior of a set of self-sorting robots.
In this example, each robot has a unique integer identifier, its own battery, and interacts with others via a shared grid. 
Robots, batteries, and grids are all modeled as TES transition systems.
We add a set of \emph{swap} protocols each of which coordinates a pair of robots to swap their positions when a robot faces another robot with a lower identifier and a higher $x$-axis coordinate. 
We analyze the resulting system by querying if a robot can reach its sorted configuration on the grid. 
The interaction among all TES transition systems is captured by a suitable composability relation on observations. 

Our work has the following benefits.
First, performing lazy composition keeps the representation of an interacting system small. 
Second, step-wise runtime composition renders our runtime framework modular, where run-time replacement of individual components becomes possible (as long as the update complies with some rules). 
Finally, the runtime framework more closely matches the architecture of a distributed framework, 
where entities are physically separated and no party may have access to the whole description of the entire system.

\newcommand{\Co}[1]{\mathit{Co(#1)}}
\newcommand{\E}[1]{\mathit{E(#1)}}
\newcommand{\Fin}[1]{\mathcal{L}^{\mathrm{fin}}(#1)}
\newcommand{\Inf}[1]{\mathcal{L}^{\mathrm{inf}}(#1)}
\newcommand{\Lan}[1]{\mathcal{L}(#1)}
\newcommand{\FG}[1]{\mathit{FG}(#1)}
\newcommand{\Rel}{\mathcal{R}}
\newcommand{\ssync}{\mathit{sync}}
\newcommand{\tes}{\textit{TES}}
\newcommand{\sorted}{\mathit{sorted}}
\newcommand{\ind}{\mathit{ind}}
\renewcommand{\E}{\mathbb{E}}

\section{Components in interaction}

In~\cite{DBLP:journals/corr/abs-2110-02214}, we give a unified semantic model to capture cyber and physical aspects of processes as components and characterize their various types of interactions as user-defined products in an algebraic framework.
Moreover, we show some general conditions for products on components to be associative, commutative, and idempotent.
In this section, we recall the basic definitions of a component and product from~\cite{DBLP:journals/corr/abs-2110-02214}, and introduce in Section 2.2 some instances of product that suit our example in this paper.

\paragraph{Notations.}
Given $\sigma: \N \to \Sigma$, let $\sigma[n] \in \Sigma^n$ be the finite prefix of size $n$ of $\sigma$ and let $\sim_n$ be an equivalence relation on $(\N\to\Sigma)\times(\N\to\Sigma)$ such that $\sigma \sim_n \tau$ if and only if $\sigma[n] = \tau[n]$.
Let $\FG{L}$ be the set of \emph{left factors} of a set $L \subseteq \Sigma^\omega$, defined as $\FG{L} = \{ \sigma[n] \mid n \in \N,\ \sigma \in L\}$. We write $\sigma^{(n)}$ for the $n$-th derivative of $\sigma$, i.e., the stream such that $\sigma^{(n)}(i) = \sigma(n+i)$ for all $i \in \N$.

A timed-event stream $\sigma \in \TES{E}$ over a set of events $E$ is an infinite sequence of \emph{observations}, where an observation $\sigma(i) = (O, t)$ consists of a pair of a subset of events in $O \subseteq E$, called \emph{observable}, and a positive real number $t \in \mathbb{R}_+$ as time stamp.
A timed-event stream (TES) has the additional properties that consecutive time stamps are increasing and non-Zeno, i.e., for any TES $\sigma$ and any time $t \in \mathbb{R}$, there exists an element $\sigma(i) = (O_i, t_i)$ in the sequence such that $t< t_i$.
We use $\sigma'$ to denote the derivative of the stream $\sigma$, such that $\sigma'(i) = \sigma(i+1)$ for all $i \in \N$.

We recall the greatest post fixed point of a monotone operator, that we later use as a definition scheme and as a proof principle.
Let $X$ be any set and let $\Po(X) = \{ V \mid V \subseteq X\}$ be the set of all its subsets.
If $\Psi : \Po(X) \rightarrow \Po(X)$ is a monotone operator, that is, $R \subseteq S$ implies $\Psi(R) \subseteq \Psi(S)$ for all $R \subseteq X$ and $S \subseteq X$, then $\Psi$ has a greatest fixed point $P = \Psi(P)$ satisfying:
$$
P = \bigcup \{ R \mid R \subseteq \Psi(R)\}
$$
This equality can be used as a proof principle: in order to prove that $R \subseteq P$, for any $R \subseteq X$, it suffices to show that $R$ is a post-fixed point of $\Psi$, that is, $R \subseteq \Psi(R)$.

\subsection{Components}
\label{section:components}
Components are the parts that forms cyber-physical systems. Therefore, a component uniformly models both cyber and physical aspects through sequence of observables. 
 \begin{definition}[Component]
     \label{def:component}
     A \emph{component} $C = (E,L)$ is a pair of an \emph{interface} $E$, and a \emph{behavior} $L \subseteq \TES{E}$.
\hfill$\triangle$
 \end{definition}
 The product of two components defines a new component whose behavior is function of the behavior of both components, under some constraints. We call a \emph{composability relation} $R(E_1, E_2) \subseteq \TES{E_1} \times \TES{E_2}$ such constraint, and say that $(\sigma_1, \sigma_2)$ are two \emph{composable} TESs if $(\sigma_1, \sigma_2) \in R(E_1, E_2)$.
 Furthermore, two TESs may then compose to form a new TES. We call a \emph{composition function} $\oplus : \TES{E_1} \times \TES{E_2} \rightarrow \TES{E_1 \cup E_2}$ such function, and write $\sigma_1 \oplus \sigma_2$ for the composition of $\sigma_1$ and $\sigma_2$.

     Let $C_1 = (E_1, L_1)$ and $C_2 = (E_2, L_2)$ be two components. 
     Let $R(E_1, E_2)$ be a composability relation and $\oplus$ be a composition function.
 \begin{definition}[Product]
     \label{def:prod}
     The \emph{product} of $C_1$ and $C_2$ under $(R,\oplus)$ is the component $C = C_1 \times_{(R,\oplus)} C_2 = (E_1 \cup E_2, L)$  where
     \[
         \{ \sigma_1 \oplus \sigma_2 \mid \sigma_1 \in L_1, \sigma_2 \in L_2,\ (\sigma_1, \sigma_2) \in R(E_1, E_2)\}
     \]
\hfill$\triangle$
 \end{definition}

While the behaviors of a component are streams, it is natural to consider termination of a component. 
We express a terminating behavior of component $C= (E,L)$ as an element $\sigma \in L$ such that there exists $n \in \N$ with $\sigma^{(n)} \in \TES{\emptyset}$. In other words, a terminating behavior $\sigma$ is such that, starting from the $n$-th observation, all next observations are empty. 

Given a component $C$, we define $C^*$ to be the component that may terminate after every sequence of observables. 
Formally, $C^*$ is the component whose behavior is the prefix closure of $C$, i.e., the component $C^* = (E, L^*)$, where 
\[
L^* = L \cup\{ \tau \mid  \exists n \in \N. \exists \sigma \in L.\ \tau \sim_n \sigma,\ \tau^{(n)} \in \TES{\emptyset} \}
\]

In~\cite{DBLP:journals/corr/abs-2110-02214}, we give a co-inductive definition for some $R$ and $\oplus$ given a composability relation on observations, and a composition function on observations. 
We use $\kappa(E_1, E_2) \subseteq (\Po(E_1) \times \Rp) \times (\Po(E_2) \times \Rp)$ to range over composability relation on observations.
    Let $\kappa$ be a parametric composability relation on observations, and let $\Phi_\kappa(E_1, E_2)$
    be such that, for any $\Rel \subseteq \TES{E_1} \times \TES{E_2}$:
 \[
\begin{array}{rl}
    \Phi_\kappa(E_1,E_2)(\Rel) = \{(\tau_1, \tau_2) \mid & (\tau_1(0), \tau_2(0)) \in \kappa(E_1,E_2) \land   \\
                                                         & (\pr_2(\tau_1)(0) =t_1 \land  \pr_2(\tau_2)(0) =t_2) \land \\
                                                         &(t_1 < t_2 \land (\tau_1',\tau_2) \in \Rel \lor t_2 < t_1 \land (\tau_1,\tau_2') \in \Rel \lor \\
                                                         &\quad t_2 = t_1 \land (\tau_1',\tau_2') \in \Rel)\}
\end{array}
\]
The \emph{lifting} of $\kappa$ on \tes s, written $[\kappa]$, is the parametrized relation obtained by taking the greatest post fixed point of the function $\Phi_\kappa(E_1, E_2)$ for arbitrary pair $E_1, E_2 \subseteq \E$, i.e., the relation $[\kappa](E_1, E_2) = \bigcup_{\Rel \subseteq \TES{E_1}{} \times \TES{E_2}{}} \{ \Rel \mid \Rel \subseteq \Phi_\kappa(E_1, E_2)(\Rel)\}$.

    Then, let $\sqcap \subseteq \Po(\E)^2$ be a relation on observables. 
        We say that two observations are synchronous under $\sqcap$ if, intuitively, the two following conditions hold:
        \begin{enumerate}
            \item every observable that can compose (under $\sqcap$) with another observable must occur simultaneously with one of its related observables; and
            \item only an observable that does not compose (under $\sqcap$) with any other observable can happen before another observable, i.e., at a strictly lower time.
        \end{enumerate}
        To formalize the conditions above, we use the independence relation $\ind_\sqcap$ where $\ind_\sqcap(X,Y) = \forall x \subseteq X.\forall y \subseteq Y. (x,y) \not\in \sqcap$.

        The \emph{synchronous} composability relation on observations $\kappa^{\ssync}_{\sqcap}(E_1,E_2)$ is the smallest set such that, for $O_1, O_1' \subseteq \Po(E_1)$ and $O_2, O_2' \subseteq \Po(E_2)$:
    \begin{itemize}
        \item if $(O_1, O_2) \in \sqcap$ and $\ind_\sqcap(O_1', E_2)$ and $\ind_\sqcap(E_1, O_2')$ then, for all time stamps $t\in \Rp$, $((O_1\cup O_1', t), (O_2 \cup O_2',t)) \in \kappa_\sqcap^{\ssync}(E_1, E_2)$.
        \item  if $\ind_\sqcap(O_1, E_2)$ then for all $X_2 \subseteq E_2$ and $t_1 \leq t_2$, $((O_1,t_1), (X_2,t_2)) \in \kappa^\ssync_\sqcap(E_1,E_2)$. Reciprocally, if $\ind_\sqcap(E_1, O_2)$ then for all $X_1 \subseteq E_1$ and $t_2 \leq t_1$, $((X_1,t_1), (O_2,t_2)) \in \kappa^\ssync_\sqcap(E_1,E_2)$;
    \end{itemize}

    Let $\bowtie$ be the product defined as $\bowtie\ = \times_{([\kappa_\sqcap^{\mathit{sync}}],[\cup])}$ where $\sqcap = \{ (O,O) \mid O\subseteq \mathbb{E}\}$ with $\mathbb{E}$ the universal set of events. 
    Intuitively, $\bowtie$ synchronizes all observations that contain events shared by the interface of two components.
    We recall that the composition $[\cup]$ of two TESs $\sigma_1$ and $\sigma_2$ is such that it interleaves, in order, all observations with distinct time stamps, and takes the union of two observations with the same time stamp, i.e., the \emph{lifting} of set union $\cup$ to \tes s, written $[\cup]$, is such that, for $\sigma_1,\sigma_2 \in \TES{\E}$ where $\sigma_i(0) = (O_i, t_i)$ with $i \in \{1,2\}$:
    \[
        \sigma_1[\cup]\sigma_2 =
    \begin{cases}
        \langle \sigma_1(0)\rangle \cdot (\sigma_1'[\cup] \sigma_2)   &\mathit{if}\ t_1 < t_2 \\
        \langle \sigma_2(0)\rangle \cdot (\sigma_1[\cup] \sigma_2')     &\mathit{if}\ t_2 < t_1 \\
        \langle (O_1 \cup O_2, t_1)\rangle \cdot (\sigma_1'[\cup] \sigma_2') &  \mathit{otherwise}
    \end{cases}
    \]

    As a result of~\cite{DBLP:journals/corr/abs-2110-02214}, $\bowtie$ is associative and commutative. 
Section~\ref{section:sorting} introduces a motivating example in which robots, roaming on a shared physical medium, must coordinate to sort themselves.
We define algebraically the system consisting of 5 robots and a grid, to which we then add some coordinating protocol components.  

\subsection{Self-sorting robots}
\label{section:sorting}
Consider a robot component $R$, that moves on a grid and reads its position. Its interface consists of the coordinates $(x,y)_R$ that it reads and the direction $d_R$ of its moves, i.e., 
$E_R = \{ (x,y)_R, d_R \mid d \in \{S,W,N,E\},\ (x,y) \in \Rp \times \Rp \}$. 
 The robot freely moves and observes its position at anytime, then its behavior is a subset $L_R \subseteq \TES{E_R}$.
 We use $R$ to refer to an arbitrary robot component with interface $E_R$.

We aim to model the physical constraints explicitly. Thus, the grid on which the robot moves is another component. Its interface is parametrized by a set $I$ of robot identifiers, and by a pair $(n,m)$ for its size on the $x$-axis and $y$-axis respectively, i.e., 
$E_G(I,n,m) = \{ (x,y)_R, d_R \mid d \in \{S,W,N,E\}, R \in I, (x,y) \in [1;n] \times [1;m]\}$.
The behavior of a grid is a set $L_G \subseteq \TES{E_G}$ such that, for any $\sigma \in L_G$ and for any element $\sigma(i) = (O_i, t_i)$, if $(x,y)_R \in O_i$ then $(x,y)_{R'} \not\in O_i$ for every $R' \not = R$: two robots cannot share the same location.
Note that the physics has some internal constraints (no two robots can share the same cell) that no robot \emph{a priori} is aware of. 
    Typically, the move of a robot 
    coincides with a change of state of the grid:
if a robot $R$ is at position $(x,y)_R$ and does a move $N_R$, then the next observable position for robot $R$ will be $(x,y+1)_R$. We give in Example~\ref{ex:TES-tr-sys} an operational definition of the grid component.
We use $G(I, n, m)$ to refer to an arbitrary grid component with set of events $E_G(I,n,m)$.

Figure~\ref{fig:robots} shows five robot instances, each of which has a unique and distinct natural number assigned, positioned at an initial location on a grid.
The goal of the robots in this example is to move around on the grid such that they end up in a final state where they line-up in the sorted order according to their assigned numbers.
    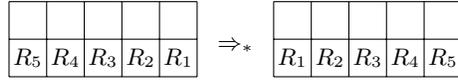
\begin{figure}
        \centering
        \begin{tikzpicture}
            \draw[step=0.5,black,thin] (0,0) grid (2.5,1);
            \node[] at (0.25,0.25) (3) {$R_5$};
            \node[] at (0.75,0.25) (1) {$R_4$};
            \node[] at (1.25,0.25) (2) {$R_3$};
            \node[] at (1.75,0.25) (3) {$R_2$};
            \node[] at (2.25,0.25) (3) {$R_1$};
        \end{tikzpicture}
        \begin{tikzpicture}
            \node[] at (0,0) (1) {};
            \node[] at (0,0.3) (1) {$\Rightarrow_*$};
        \end{tikzpicture}
        \begin{tikzpicture}
            \draw[step=0.5,black,thin] (0,0) grid (2.5,1);
            \node[] at (0.25,0.25) (3) {$R_1$};
            \node[] at (0.75,0.25) (1) {$R_2$};
            \node[] at (1.25,0.25) (2) {$R_3$};
            \node[] at (1.75,0.25) (3) {$R_4$};
            \node[] at (2.25,0.25) (3) {$R_5$};
        \end{tikzpicture}
        \caption{Initial state of the unsorted robots (left), and final state of the sorted robots (right).}
        \label{fig:robots}
    \end{figure}

    \begin{table}
        \centering
        \begin{tabular}{cccccc}
            & $R_1$ & $R_2$ & $R_3$& $R_4$& $R_5$ \\
            \hline 
            $t_1$ & N & - & - & - & - \\   
            $t_2$ & W & - & - & - & - \\   
            $t_3$ & $(3;1)$ & - & - & -& - \\
            $...$ & $...$ & $...$ & $...$& $...$& $...$
        \end{tabular}
        \quad \quad
        \begin{tabular}{cccccc}
            & $R_1$ & $R_2$ & $R_3$ &$R_4$ & $R_5$ \\
            \hline 
            $t_1$ & N & - & - & - & -\\   
            $t_2$ & W & E & N & - & -\\   
            $t_3$ & S & $(4;0)$ & E & - & -\\
            $...$ & $...$ & $...$ & $...$ & $...$ & $...$
        \end{tabular}
        \quad \quad
        \begin{tabular}{cccccc}
            & $R_1$ & $R_2$ & $R_3$ & $R_4$ & $R_5$\\
            \hline 
            $t_1$ & N & - & - & - & N\\   
            $t_2$ & W & E & - & W & E\\   
            $t_3$ & S & - & - & - & S\\
            $...$ & $...$ & $...$ & $...$& $...$ & $...$
        \end{tabular}
        \caption{Each table displays the three first observables at times $t_1$, $t_2$, and $t_3$ for three TESs in the behavior of the product of components $R_1$, $R_2$, $R_3$, $R_4$, and $R_5$ on the grid of Figure~\ref{fig:robots}. We omit the subscript and use the column to identify the events. The symbol - represents the absence of observation in the TES.}
        \label{table:prefix}
    \end{table}

    We write the composite cyber-physical system consisting of $5$ robot components interacting with a shared grid as the following expression:
    \[
        R_1 \bowtie R_2 \bowtie R_3 \bowtie R_4 \bowtie R_5 \bowtie G(\{1,2,3,4,5\}, 5, 2)
    \]

    Three first observations for three behaviors are displayed in Table~\ref{table:prefix}. Each behavior exposes different degrees of concurrency, where in the left behavior, only robot $R_1$ moves, while in the middle behavior, robots $R_1$ and $R_2$ swap their positions, and in the right behavior both $R_1$ and $R_4$ swap their positions with $R_2$ and $R_5$, respectively.

\subsection{Properties of components and coordination}
A component $C = (E,L)$ may satisfy some properties on its behavior. We consider trace properties $P \subseteq \TES{E}$ and say that $C$ satisfies $P$ if and only if $L \subseteq P$, i.e., all the behavior of $C$ is included in the property $P$.

For the set of robots and the grid, we consider the following property: \emph{eventually, the position of each robot $R_i$ is $(i,0)_{R_i}$}, i.e., every robot successfully reaches its place.

This property is a trace property, which we call $P_{\sorted}$ and consists of every behavior $\sigma \in \TES{E}$ such that there exists an $n \in \N$ with $\sigma(n) = (O_n,t_n)$ and $(i,0)_{R_i} \in O_n$ for all robots $R_i$.
As shown in Table~\ref{table:prefix}, the set of behaviors for the product of robots is large, and the property $P_{\sorted}$ does not (necessarily) hold \emph{a priori}: there exists a composite behavior $\tau$ for the component
$
R_1 \bowtie R_2 \bowtie R_3 \bowtie R_4 \bowtie R_5 \bowtie G(\{1,2,3,4,5\}, 5, 2)$ such that $\tau \not \in P_{\sorted}$.

Robots may beforehand decide on some strategies to swap and move on the grid such that their composition satisfies the property $P_\sorted$. 
For instance, consider the following strategy for each robot $R_n$:
    \begin{itemize}
        \item \emph{swapping}: if the last read $(x,y)$ of its location is such that $x < n$, then move North, then West, then South.
        \item \emph{pursuing}: otherwise, move East.
    \end{itemize}

Remember that the grid prevents two robots from moving to the same cell, which is therefore removed from the observable behavior. 
 We emphasize that some sequences of moves for each robot may deadlock, and therefore are not part of the behavior of the system of robots.
 Consider Figure~\ref{fig:moves}, for which each robot follows its internal strategy. 
 Because of non-determinism introduced by the timing of each observations, one may consider the following sequence of observations: first, $R_1$ move North, then West; in the meantime, $R_2$ moves West, followed by $R_3$, $R_4$, and $R_5$. 
 By a similar sequence of moves, the set of robots ends in the configuration on the right of Figure~\ref{fig:moves}.
 In this position and for each robot, the next move dictated by its internal strategy is disallowed, which corresponds to a \emph{deadlock}.
 While behaviors do not contain finite sequences of observations, which makes the scenario of Figure~\ref{fig:moves} not expressible as a TES, such scenario may occur in practice. 
 We give in next Section some analysis to prevent such behavior to happen.
    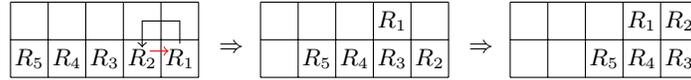
\begin{figure}
        \centering
        \begin{tikzpicture}
            \draw[step=0.5,black,thin] (0,0) grid (2.5,1);
            \node[] at (0.25,0.25) (1) {$R_5$};
            \node[] at (0.75,0.25) (2) {$R_4$};
            \node[] at (1.25,0.25) (3) {$R_3$};
            \node[] at (1.75,0.25) (4) {$R_2$};
            \node[] at (2.25,0.25) (5) {$R_1$};
            \draw[->] (2.25,0.45) to (2.25,0.75) to ++(-0.5,0) to ++(0,-0.35);
            \draw[->, red] (1.85,0.35) to ++(0.25,0);
        \end{tikzpicture}
        \begin{tikzpicture}
            \node[] at (0,0) (1) {};
            \node[] at (0,0.3) (1) {$\Rightarrow$};
        \end{tikzpicture}
        \begin{tikzpicture}
            \draw[step=0.5,black,thin] (0,0) grid (2.5,1);
            \node[] at (1.75,0.75) (1) {$R_1$};
            \node[] at (0.75,0.25) (2) {$R_5$};
            \node[] at (1.25,0.25) (3) {$R_4$};
            \node[] at (1.75,0.25) (4) {$R_3$};
            \node[] at (2.25,0.25) (5) {$R_2$};
        \end{tikzpicture}
        \begin{tikzpicture}
            \node[] at (0,0) (1) {};
            \node[] at (0,0.3) (1) {$\Rightarrow$};
        \end{tikzpicture}
        \begin{tikzpicture}
            \draw[step=0.5,black,thin] (0,0) grid (2.5,1);
            \node[] at (2.25,0.25) (1) {$R_3$};
            \node[] at (2.25,0.75) (2) {$R_2$};
            \node[] at (1.75,0.25) (3) {$R_4$};
            \node[] at (1.25,0.25) (4) {$R_5$};
            \node[] at (1.75,0.75) (5) {$R_1$};
        \end{tikzpicture}
        \caption{Initial state of the unsorted robot (left) leading to a possible deadlock (right) if each robot follows its strategy.}
        \label{fig:moves}
    \end{figure}

Alternatively, the collection of robots may be coordinated by an external protocol that guides their moves. 
Besides considering the robot and the grid components, we add a third kind of component that acts as a coordinator. 
In other words, we make the protocol used by robots to interact explicit and external to them and the grid; i.e., we assume exogenous coordination. 
Exogenous coordination allows robots to decide a priori on some strategies to swap and move on the grid, in which case their external coordinator component merely unconditionally facilitates their interactions.
Alternatively, the external coordinator component may implement a protocol that guides the moves of a set of clueless robots into their destined final locations.
    The most intuitive of such coordinator is the property itself as a component. Indeed, let $C_\sorted = (E,L)$ be such that $E = \bigcup_{i\in I} E_{R_i}$ with $I = \{1,2,3,4,5\}$ and $L = P_\sorted$.
    Then, and as shown in~\cite{DBLP:journals/corr/abs-2110-02214}, the coordinated component 
    $
R_1 \bowtie R_2 \bowtie R_3 \bowtie R_4 \bowtie R_5 \bowtie G(\{1,2,3,4,5\}, 5, 2) \bowtie C_\sorted 
    $
    trivially satisfies the property $P_\sorted$.
    While easily specified, such coordination component is non-deterministic and not easily implementable. 
    We provide an example of a deterministic coordinators.

    Given two robot identifiers $R_1$ and $R_2$, we introduce the swap component $S(R_1, R_2)$ that coordinates the two robots $R_1$ and $R_2$ to swap their positions.
    Its interface $E_S(R_1, R_2)$ contains the following events:
    \begin{itemize}
        \item start(S($R_1$,$R_2$)) and end(S($R_1$, $R_2$)) that respectively notify the beginning and the end of an interaction with $R_1$ and $R_2$. Those events are observed when the swap protocol is starting or ending an interaction with either $R_1$ or with $R_2$.
        \item $(x,y)_{R_1}$ and $(x,y)_{R_2}$ that occur when the protocol reads, respectively, the position of robot $R_1$ and robot $R_2$,
        \item $d_{R_1}$ and $d_{R_2}$ for all $d \in \{N,W,E,S\}$ that occur when the robots $R_1$ and $R_2$ move;
        \item locked(S($R_1$,$R_2$)) and unlocked(S($R_1$,$R_2$)) that occur, respectively, when another protocol begin and end an interaction with either $R_1$ and $R_2$.
    \end{itemize}
    The behavior of a swapping protocol $S(R_1, R_2)$ is such that, it starts its protocol sequence by an observable start(S($R_1, R_2$)), then it moves $R_1$ North, then $R_2$ East, then $R_1$ West and South.
    The protocol starts the sequence only if it reads a position for $R_1$ and $R_2$ such that $R_1$ is on the cell next to $R_2$ on the $x$-axis.
    Once the sequence of moves is complete, the protocol outputs the observable end(S($R_1$, $R_2$)).
    If the protocol is not swapping two robots, or is not locked, then robots can freely read their positions. 
    
    Swapping protocols interact with each others by locking other protocols that share the same robot identifiers. 
    Therefore, if S($R_1$,$R_2$) starts its protocol sequence, then S($R_2$, $R_i$) synchronizes with a locked event locked(S($R_2$,$R_i$)), for $2 < i$. 
    Then, $R_2$ cannot swap with other robots unless S($R_1$,$R_2$) completes its sequence, in which case end(S($R_1$, $R_2$)) synchronizes with unlocked(S($R_2$,$R_i$)) for $2 < i$.
    We extend the underlying composability relation $\sqcap$ on observables such that, for all $(O_1, O_2) \in \sqcap$ and $i<j$:
    \begin{align*}
        \text{start(S($R_i$,$R_j$))} \in O_1 \implies &\exists k. k<i. \text{locked(S($R_k$, $R_i$))}\in O_2 \lor \\ 
                                                      &\exists k. j<k. \text{locked(S($R_j$, $R_k$))} \in O_2
    \end{align*}
    and
    \begin{align*}
        \text{end(S($R_i$,$R_j$))} \in O_1 \implies &\exists k<i. \text{unlocked(S($R_k$, $R_i$))}\in O_2 \lor \\ 
                                                      &\exists j<k. \text{unlocked(S($R_j$, $R_k$))} \in O_2
    \end{align*}

    For each pair of robots $R_i$, $R_j$ such that $i<j$, we introduce a swapping protocol S($R_i$, $R_j$). As a result, the coordinated system is given by the following composition:
$$
R_1 \bowtie R_2 \bowtie R_3 \bowtie R_4 \bowtie R_5 \bowtie G(\{1,2,3,4,5\}, 5, 2) \bowtie_{i<j} S(R_i, R_j)
$$

Note that the definition of $\sqcap$ and of $\kappa_\sqcap^\ssync$ impose that, if one protocol starts its sequence, then all protocols that share some robot identifiers synchronize with a lock event. Similar behavior occurs at the end of the sequence. 

The study of the coordinated system, and the analysis of its possible deadlock is the object of Section~\ref{section:operational}. We give an operational specification of components using TES transition systems. We define the possibility for such operational components to deadlock, and give some conditions to prevent deadlock from occurring given a composition of deadlock free components.

\section{An operational specification of components}
\label{section:operational}

In Section~\ref{section:components}, we give a declarative specification of components, and considers infinite behaviors only. 
We give, in Section~\ref{section:TES-tr-sys}, an operational specification of components using TES transition systems.
We relate the parametrized product of TES transition systems with the parametrized product on their corresponding components, and show its correctness. 
The composition of two TES transition systems may lead to transitions that are not composable, and ultimately to a deadlock, i.e., a state with no outgoing transitions. 

\subsection{TES transition systems.}
\label{section:TES-tr-sys}
The behavior of a component as in Definition~\ref{def:component} is a set of TESs. 
We give an operational definition of such set using a labelled transition system.
\begin{definition}[TES transition system]
A TES transition system is a triple $(Q,E,\rightarrow)$ where $Q$ is a set of states, $E$ is a set of events, and $\rightarrow \subseteq Q \times (\Po(E) \times \Rp) \times Q$ is a labeled transition relation, where labels are observations.
\hfill$\triangle$
\end{definition}

We present two different ways to give a semantics to a TES transition system: inductive and co-inductive. 
Both definitions give the same behavior, as shown in Theorem~\ref{thm:semantics}, and we use interchangeably each definition to simplify the proofs of, e.g., Theorem~\ref{thm:correctness}.

\paragraph{Semantics 1 (runs).}
Let $T = (Q,E,\rightarrow)$ be a TES transition system.
Given $s \in (\Po(E)\times \Rp)^n$, we write $q \xrightarrow{s} p$ for the sequence of transitions $q \xrightarrow{s(0)} q_1 \xrightarrow{s(1)} q_2\ ... \xrightarrow{s(n)} p$. We use $\to^*$ and $\to^\omega$ to denote, respectively, the set of finite and infinite sequences of consecutive transitions in $\to$.
Then, finite sequences of observables form the set $\Fin{T,q} = \{ \sigma \in \TES{E} \mid q \xrightarrow{s} q', \exists n. s = \sigma[n] \land \sigma^{(n)} \in \TES{\emptyset} \}$ and infinite ones, the set $\Inf{T,q} = \{ \sigma \in \TES{E} \mid \forall n. \sigma[n] \in \Fin{T,q} \}$ where, as introduced above, $\sigma[n]$ is the prefix of size $n$ of $\sigma$.
The semantics of such a TES transition system $T = (Q,E,\rightarrow)$, starting in a state $q \in Q$, is the component $C_T(q) = (E, \Inf{T,q})$. 


\paragraph{Semantics 2 (greatest post fixed point)}
Alternatively, the semantics of a TES transition system is the greatest post fixed point of a function over sets of TESs paired with a state. 
For a TES transition system $T = (Q, E, \to)$, let $\mathcal{R} \subseteq \TES{E} \times Q$. We introduce $\phi_{T} : \Po(\TES{E}\times Q) \rightarrow \Po(\TES{E}\times Q)$
as the function:
$$
\begin{array}{rl}
    \phi_T(\Rel) = \{ (\tau,q) \mid   & \exists p \in Q,\ q\xrightarrow{\tau(0)}p \land (\tau',p) \in \Rel\}
\end{array}
$$

We can show that $\phi_T$ is monotonous, and therefore $\phi_T$ has a greatest post fixed point $\Omega_T = \bigcup \{ \Rel \mid \Rel \subseteq \phi_T(\Rel)\}$.
We write $\Omega_T(q) = \{ \tau \mid (\tau,q) \in \Omega_T\}$ for any $q \in Q$.
Note that the two semantics coincide.
\begin{theorem}[Equivalence]
    \label{thm:semantics}
    For all $q \in Q$, $\Inf{T,q} = \{ \tau \mid (\tau, q) \in \Omega_T\}$.
\end{theorem}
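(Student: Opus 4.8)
The plan is to prove the two inclusions separately, using the characterization of $\Omega_T$ as the greatest post-fixed point of $\phi_T$ together with the coinduction proof principle recalled in Section~2.

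For the inclusion $\Omega_T(q) \subseteq \Inf{T,q}$, I would start from $(\tau,q) \in \Omega_T$ and unfold $\Omega_T \subseteq \phi_T(\Omega_T)$ $n$ times; this yields a finite run $q \xrightarrow{\tau(0)} q_1 \xrightarrow{\tau(1)} \cdots \xrightarrow{\tau(n-1)} q_n$ of $T$, and in particular $q \xrightarrow{\tau[n]} q_n$. It then remains only to exhibit an element of $\Fin{T,q}$ whose length-$n$ left factor is $\tau[n]$, which is the routine ``padding'' step: append to $\tau[n]$ any strictly increasing non-Zeno stream of empty observations to obtain a TES $\rho$ with $\rho[n]=\tau[n]$, $q \xrightarrow{\rho[n]} q_n$ and $\rho^{(n)} \in \TES{\emptyset}$. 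Thus $\tau[n] \in \FG{\Fin{T,q}}$ for every $n$, and since $\tau \in \TES{E}$ (because $\Omega_T \subseteq \TES{E} \times Q$) we conclude $\tau \in \Inf{T,q}$.

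For the converse inclusion $\Inf{T,q} \subseteq \Omega_T(q)$, I would show that every $\tau \in \Inf{T,q}$ is the label of an infinite run of $T$ starting in $q$. The relation $\{(\tau,q) \mid \tau \text{ labels an infinite run of } T \text{ from } q\}$ is readily checked to be a post-fixed point of $\phi_T$, so the proof principle then places $\tau$ inside $\Omega_T(q)$. To produce the run, note that $\tau \in \Inf{T,q}$ gives, for each $n$, a left factor of some element of $\Fin{T,q}$ equal to $\tau[n]$, hence (after discarding trailing empty observations) an actual finite run of $T$ from $q$ whose label agrees with $\tau$ on its first $n$ observations. Collecting these runs into the tree whose level-$k$ vertices are the states reachable from $q$ along $\tau[k]$, with an edge from each such state to each of its $\tau(k)$-successors, every level of the tree is non-empty, and any infinite branch is precisely an infinite run of $T$ from $q$ labeled $\tau$.

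The step I expect to be the main obstacle is exactly this last one: extracting an infinite branch (equivalently, a single infinite run whose label is $\tau$) from the family of finite runs that witness the individual prefixes $\tau[n]$. This is a König's-lemma argument, and it goes through only when the tree is finitely branching, i.e., when $\to$ is image-finite --- $\{p \mid q \xrightarrow{o} p\}$ finite for every state $q$ and observation $o$ --- which holds for the transition systems considered here. The remaining work is bookkeeping: keeping the treatment of the ``eventually empty'' tails in $\Fin{T,q}$ consistent across the two directions, and discharging the non-Zeno and monotonicity side conditions so that the streams built in the padding step are genuine TESs.
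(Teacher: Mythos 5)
Your first inclusion is sound and essentially the paper's: unfolding $\Omega_T\subseteq\phi_T(\Omega_T)$ $n$ times yields the finite run labelled $\tau[n]$, and padding $\tau[n]$ with a non-Zeno tail of empty observations is exactly what is needed to exhibit a member of $\Fin{T,q}$ witnessing $\tau[n]\in\FG{\Fin{T,q}}$. Your reduction of the converse inclusion to ``every $\tau\in\Inf{T,q}$ labels an infinite run,'' via the post-fixed point $\{(\tau,q)\mid \tau \text{ labels an infinite run of } T \text{ from } q\}$, is also correct and matches the coinductive half of the paper's argument.

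The gap is precisely the step you flag. The K\"onig's-lemma extraction needs the tree of finite runs realizing the prefixes $\tau[n]$ to be finitely branching, and you discharge this with the claim that image-finiteness ``holds for the transition systems considered here.'' It does not: the definition of a TES transition system places no finiteness condition on $\to$, and the paper's own Example~\ref{ex:TES-tr-sys} has, from each state $t$ and for the single label $(\{e\},t)$, the uncountable successor set $\{t+\delta\mid\delta\in\Rp\}$ (the grid branches the same way). Worse, without such a hypothesis the statement itself fails: take $E=\{e\}$, states $q_0$ and $(n,k)$ for $1\le k\le n$, and transitions $q_0\xrightarrow{(\{e\},1)}(n,1)$ and $(n,k)\xrightarrow{(\{e\},k+1)}(n,k+1)$ for $k<n$. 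Every prefix of the TES $\tau$ with $\tau(i)=(\{e\},i+1)$ labels a finite run, so $\tau\in\Inf{T,q_0}$ after padding, yet there is no infinite run from $q_0$ and hence $\Omega_T(q_0)=\emptyset$. So your argument proves the theorem only for transition relations that are image-finite per label, a hypothesis that would have to be added to the statement. For what it is worth, the paper's own proof buries the same issue inside an unjustified equivalence between ``$\forall n.\ \tau[n]\in\Fin{T,q}$'' and the existence of a single infinite run, so you have located a genuine weakness of the theorem rather than merely of your own write-up; but as submitted, the claim on which your second inclusion rests is unsupported and, for the systems actually considered in the paper, false.
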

\begin{proof}
    Let $T = (Q,E\to)$ and $\Omega_T(q) = \{ \tau \mid (\tau,q) \in \Omega_T\}$.
    \[
\begin{array}{rl}
    \tau \in \Omega_{T}(c_0) 
\iff &  (\tau, c_0) \in \Omega_T \\
\iff &  c_0 \xrightarrow{\tau(0)}c_1 \land (\tau',c_1) \in \Omega_T\\
\iff &  \exists \chi \in \rightarrow^\omega.\  \chi(0) = c_0\xrightarrow{\tau(0)} c_1 \land (\tau', c_1) \in \Omega_T \\
\iff & \exists \chi \in \rightarrow^\omega, c\in Q^\omega.\ c(0) \in Q_0 \land \forall n \in \N. \\ 
     & \chi(n) = c(n)\xrightarrow{\tau(n)} c(n+1) \land (\tau^{(n+1)},c(n+1)) \in \Omega_T \\
\iff & \exists \chi \in \to^\omega.\ \chi(0) = c_0 \xrightarrow{\tau(0)}c_1\land \tau \in \Inf{T,q} \\
                        \iff & \tau \in \Inf{T,q} 
\end{array} 
\]
In the fourth equivalence, we state that the infinite sequence $\chi \in \rightarrow^\omega$ has, as sequence of labels, the sequence of observations in $\tau$. 
We prove the step by induction.
Let $n \in \N$, and let $\chi \in \rightarrow^\omega$ be such that, for all $k\leq n$, $\chi(k) = c_k \xrightarrow{\tau(k)} c_{k+1}$ with $c_{k} \in Q$ and $(\tau^{(k)}, c_k) \in \Omega_T$. 
Then, given that $(\tau^{(n)}, c_n) \in \Omega_T$, there exists a transition $c_n \xrightarrow{\tau(n+1)} c_{n+1}$ and there exists $\rho \in \to^\omega$ such that, for all $k \leq n$, $\rho = \chi$ and $r(n+1) = c_n \xrightarrow{\tau(n+1)} c_{n+1}$, which proves the implication.
The other direction of the equivalence is simpler. If there exists $\chi \in \to^\omega$ such that for all $n \in \N$, $\chi(n) = c_n \xrightarrow{\tau(n)} c_{n+1}$, then we have a witness, for every $n \in \N$, that $(\tau^{(n)}, c_n)$ is an element of $\Omega_T$.
\qed
\end{proof}

\begin{remark}[Deadlock]
    Observe that $\FG{\Inf{T,q}} \subseteq \Fin{T,q}$ which, in the case of strict inclusion, captures the fact that some states may have no outgoing transitions and therefore deadlock. 
\end{remark}

\begin{remark}[Expressivness]
    There may be two different TES transition systems $T_1$ and $T_2$ such that $\Inf{T_1} = \Inf{T_2}$, i.e., a set of TESs is not uniquely characterized by a TES transition system.
\end{remark}
\newcommand{\Lock}{\mathit{lock}}
\newcommand{\Unlock}{\mathit{unlock}}
\newcommand{\Start}{\mathit{start}}
\newcommand{\End}{\mathit{end}}
\begin{example}
    \label{ex:TES-tr-sys}
    The behavior of a robot introduced earlier is a TES transition system $T_R = (\Rp,E_R, \rightarrow)$ where 
    $t \xrightarrow{(\{e\}, t)} t + \delta$ for abitrary $t$ and $\delta$ in $\Rp$ and $e \in E_R$.

    Similarly, the behavior of a grid is a TES transition system $T_G(I,n,m) = (Q_G,E_G(I,n,m), \rightarrow)$ where:
    \begin{itemize}
        \item $Q_G \subseteq \Rp\times (I\to ([0;n]\times[0;m]))$,
        \item $(t, f) \xrightarrow{(O, t)} (t + \delta, f')$ for abitrary $t$ and $\delta$ in $\Rp$, such that
            \begin{itemize}
                \item $d_R \in O$ implies $f'(R)$ is updated according to the direction $d$ if the resulting position is within the bounds of the grid;
                \item $(x,y)_R \in O$ implies $f(R) = (x,y)_R$ and $f'(R) = f(R)$;
                \item $f'(R) = f(R)$, otherwise.
            \end{itemize}
    \end{itemize}

    The behavior of a swap protocol S($R_i$,$R_j$) with $i<j$ is a TES transition system $T_S(R_1, R_2) = (Q, E, \to)$ where, for $t_1, t_2, t_3, t_4 \in \Rp$ with $t_1 < t_2 < t_3 < t_4$:
    \begin{itemize}
        \item $Q \subseteq \Rp \times \{s_1, s_2, s_3, s_4,s_6\}$;
        \item $E = E_{R_i} \cup E_{R_j} \cup \{\Lock(R_i,R_j), \Unlock(R_i,R_j), \Start(R_i,R_j), \End(R_i,R_j)\}$
        \item $(t_1, s_1) \xrightarrow{(\{\Lock(R_i,R_j)\}, t_1)} (t_2, s_2)$; 
        \item $(t_1, s_2) \xrightarrow{(\{\Unlock(R_i,R_j)\}, t_1)} (t_2, s_1)$; 
        \item $(t_1, s_1) \xrightarrow{(\{\Start(R_i,R_j), (x,y)_{R_i}, (x+1,y)_{R_j}\}, t_1)} (t_2, s_3)$;
        \item $(t_1, s_3)\xrightarrow{(\{N_{R_j}\}, t_1)}(t_2, s_4)\xrightarrow{(\{\{W_{R_j}, E_{R_i}\}\}, t_2)} (t_3, s_5)\xrightarrow{(\{\{S_{R_j}\}\}, t_3)} (t_4, s_6)$; 
        \item $(t_1, s_6) \xrightarrow{(\{\End(R_i,R_j)\}, t_1)} (t_2, s_1)$; 
    \end{itemize}
    \hfill$\blacksquare$
\end{example}

The product of two components is parametrized by a composability relation and a composition function and syntactically constructs the product of two TES transition systems.

\begin{definition}[Product]
The product of two TES transition systems $T_1 = (Q_1, E_1, \to_1)$ and $T_2 = (Q_2, E_2, \to_2)$ under the constraint $\kappa$ is the TES transition system $T_1 \times_\kappa T_2 = (Q_1 \times Q_2, E_1 \cup E_2, \to)$ such that:
\[
\cfrac{
    q_1 \xrightarrow{(O_1, t_1)}_1 q_1' \quad q_2 \xrightarrow{(O_2, t_2)}_2 q_2' \quad ((O_1, t_1), (O_2, t_2)) \in \kappa(E_1, E_2) \quad t_1 < t_2
}{
(q_1,q_2) \xrightarrow{(O_1, t_1)} (q_1',q_2)
}
\]
\[
\cfrac{
    q_1 \xrightarrow{(O_1, t_1)}_1 q_1' \quad q_2 \xrightarrow{(O_2, t_2)}_2 q_2' \quad ((O_1, t_1), (O_2, t_2)) \in \kappa(E_1, E_2) \quad t_2 < t_1
}{
(q_1,q_2) \xrightarrow{(O_2, t_2)} (q_1,q_2')
}
\]
\[
\cfrac{
    q_1 \xrightarrow{(O_1, t_1)}_1 q_1' \quad q_2 \xrightarrow{(O_2, t_2)}_2 q_2' \quad ((O_1, t_1), (O_2, t_2)) \in \kappa(E_1, E_2) \quad t_1 = t_2
}{
(q_1,q_2) \xrightarrow{(O_1\cup O_2, t_1)} (q_1',q_2')
}
\]
\hfill$\triangle$
\end{definition}

Observe that the product is defined on pairs of transitions, which implies that if $T_1$ or $T_2$ has a state without outgoing transition, then the product has no outgoing transitions from that state. The reciprocal is, however, not true in general.

Theorem~\ref{thm:correctness} states that the product of TES transition systems denotes
 (given a state) the set of TESs that corresponds to the product of the
 corresponding components (in their respective states).
 Then, the product that we define on TES transition systems does not add nor
 remove behaviors with respect to the product on their respective components.
 \begin{theorem}[Correctness]
    \label{thm:correctness}
    For all TES transition systems $T_1$ and $T_2$, and for all composability relation $\kappa$:
    \[C_{T_1 \times_\kappa T_2}(q_1,q_2) = C_{T_1}(q_1) \times_{([\kappa],[\cup])} C_{T_2}(q_2)\]
\end{theorem}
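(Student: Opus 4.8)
The plan is to prove the equality of the two components by checking interfaces and behaviors separately. The interfaces coincide immediately, since both $T_1 \times_\kappa T_2$ and the denotational product $C_{T_1}(q_1) \times_{([\kappa],[\cup])} C_{T_2}(q_2)$ carry the event set $E_1 \cup E_2$. The substance is the behavior equality
\[
\Inf{T_1 \times_\kappa T_2, (q_1,q_2)} = \{\, \sigma_1 [\cup] \sigma_2 \mid \sigma_1 \in \Inf{T_1,q_1},\ \sigma_2 \in \Inf{T_2,q_2},\ (\sigma_1,\sigma_2) \in [\kappa](E_1,E_2)\,\},
\]
which I would establish by two inclusions, freely switching between the run-based semantics and the fixed-point semantics $\Omega$ (justified by Theorem~\ref{thm:semantics}) and organizing the coinductive arguments with the greatest-post-fixed-point proof principle recalled in Section~2.

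For the inclusion $\supseteq$, I would start from $\sigma_1 \in \Inf{T_1,q_1}$, $\sigma_2 \in \Inf{T_2,q_2}$ with $(\sigma_1,\sigma_2) \in [\kappa](E_1,E_2)$ and realize $\sigma_1 [\cup] \sigma_2$ by an infinite run of $T_1 \times_\kappa T_2$ obtained by interleaving an infinite run of $T_1$ for $\sigma_1$ with one of $T_2$ for $\sigma_2$, exactly as the three clauses of $[\cup]$ prescribe. At each reached product state the heads of the two remaining streams are $\kappa$-composable because $[\kappa]$ is a post-fixed point of $\Phi_\kappa$, so precisely one of the three product rules fires, advancing one or both runs and emitting the label dictated by the matching clause of $[\cup]$. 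Concretely I would show that the relation $\{\,(\sigma_1[\cup]\sigma_2,(p_1,p_2)) \mid \sigma_1\in\Inf{T_1,p_1},\ \sigma_2\in\Inf{T_2,p_2},\ (\sigma_1,\sigma_2)\in[\kappa](E_1,E_2)\,\}$ is a post-fixed point of $\phi_{T_1\times_\kappa T_2}$; by the proof principle it then lies inside $\Omega_{T_1\times_\kappa T_2}$, which yields the inclusion. This direction is essentially bookkeeping, using only that $[\cup]$ is total and maps pairs of non-Zeno TESs to a non-Zeno TES.

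For the inclusion $\subseteq$, I would take $\tau \in \Inf{T_1\times_\kappa T_2,(q_1,q_2)}$, fix an infinite product run witnessing $\tau$, and read off a candidate decomposition: the successive first components of the product states trace an execution of $T_1$ whose labels are the $O_1$-parts emitted by the first and third rules, and symmetrically for $T_2$; call the resulting streams $\sigma_1$ and $\sigma_2$. I would then verify three things: that $\sigma_1 \in \Inf{T_1,q_1}$ and $\sigma_2 \in \Inf{T_2,q_2}$, via the proof principle for $\phi_{T_1}$ and $\phi_{T_2}$ with the projected runs as witnesses; that $(\sigma_1,\sigma_2) \in [\kappa](E_1,E_2)$, via the proof principle for $\Phi_\kappa$ applied to a relation collecting the pairs of remaining projected streams along the product run; and that $\tau = \sigma_1 [\cup] \sigma_2$, a coinductive identity obtained by matching each product rule with the corresponding clause of $[\cup]$.

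The step I expect to be the main obstacle is this last inclusion, and within it the fact that a product step may advance one component while merely \emph{witnessing}, not taking, a transition of the other. Two things must be reconciled with this. First, one has to rule out that the run keeps moving, say, $T_1$ forever with $T_2$ frozen: otherwise $\sigma_2$ would be a finite, hence non-admissible, stream. Here the argument must exploit that the emitted time stamps strictly increase without bound (the non-Zeno discipline built into $\TES{\cdot}$) together with the structure of the transition relations, so that both projections grow unboundedly and hence remain genuine TESs realized by infinite runs. Second, the composability side-condition of a product step certifies a $\kappa$-relation between the emitted observation and a \emph{discarded} witnessing transition, whereas $[\kappa]$ demands a $\kappa$-relation between the observations actually retained in $\sigma_1$ and $\sigma_2$; closing this gap is where a property of the composability relation enters (for the products such as $\bowtie$ built from $\kappa^{\ssync}_{\sqcap}$, the relevant clause of $\kappa^{\ssync}_{\sqcap}$ constrains only one side of the pair, so an admissible witness can be swapped for the retained observation). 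Once these two points are settled, the remaining rule-by-rule correspondences and the interface equality are routine.
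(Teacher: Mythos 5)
Your proposal follows essentially the same route as the paper's proof: both inclusions are established by exhibiting post-fixed points (of $\Phi_\kappa$ for the pair of projected streams in one direction, and of $\phi_{T_1\times_\kappa T_2}$ for the interleaved realization in the other), switching between the run-based and greatest-post-fixed-point semantics via Theorem~\ref{thm:semantics}. The two obstacles you single out are precisely the points the paper's argument passes over: it asserts that non-Zenoness forces both projections of a product run to be infinite, and it claims $(\tau_1(0),\tau_2(0))\in\kappa(E_1,E_2)$ ``by definition of the transition relation'' even though the product rule for $t_1<t_2$ only certifies composability of $\tau_1(0)$ with a \emph{discarded witness} transition of $T_2$, not with the observation $T_2$ actually performs later in the run. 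Your remark that closing this gap requires a witness-swapping property of $\kappa$ (available for $\kappa^{\ssync}_{\sqcap}$, but not guaranteed for an arbitrary composability relation, for which the theorem is stated) is a legitimate refinement of the paper's argument rather than a defect of your plan.
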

\begin{proof}
    Let $T_1 = (Q_1, E_1, \to_1)$ and $T_2 = (Q_2, E_2, \to_2)$.
The proof goes in two directions:
\begin{enumerate}
    \item We first show that, for any $\tau$ in the behavior of $\Inf{T_1 \times_{\kappa} T_2, (q_1,q_2)}$, there exist $\tau_1 \in \Inf{T_1,q_1}$ and $\tau_2 \in \Inf{T_2,q_2}$ such that $\tau = \tau_1[\cup] \tau_2$ and $(\tau_1, \tau_2) \in [\kappa](E_1, E_2)$.
    \item We then show that, for any $\tau_1 \in \Inf{T_1,q_1}$ and $\tau_2 \in \Inf{T_2,q_2}$ such that $(\tau_1, \tau_2) \in [\kappa](E_1, E_2)$, we have $(\tau_1 [\cup] \tau_2) \in \Inf{T_1 \times_\kappa T_2,(q_1, q_2)}$.
\end{enumerate}

We recall the definition of $\Phi_\kappa : \Po(\TES{E} \times \TES{E}) \rightarrow \Po(\TES{E} \times \TES{E})$, the function defining the lifting of $\kappa$ from observations to TESs (by co-induction), to be such that, for all $\Rel \subseteq  \TES{E} \times \TES{E}$:
$$
\begin{array}{rl}
    \Phi_\kappa(\Rel)= \{ (\tau_1, \tau_2) \mid 
    & \tau_1(0) = (O_1,t_1) \land \tau_2(0) = (O_2,t_2)\ \land  \\
                                                &(\tau_1(0), \tau_2(0)) \in \kappa(E_1, E_2) \land [ t_1 < t_2 \land (\tau_1' , \tau_2) \in \Rel \lor \\
                                                & \hskip12em\  t_2 < t_1 \land (\tau_1, \tau_2') \in \Rel \lor   \\
                                                & \hskip12em\  t_1 = t_2 \land (\tau_1',\tau_2') \in \Rel ]
            \}
\end{array}
$$

First, observe that, by construction, for all $\tau \in \Omega_{T_1 \times_\kappa T_2}(q_1,q_2)$, there exist $\tau_1 \in \Omega_{T_1}(q_1)$ and $\tau_2 \in \Omega_{T_2}(q_2)$ such that $\tau = \tau_1 [\cup] \tau_2$. Indeed, each transition in $T_1\times_\kappa T_2$ is constructed out of a composable pair of observations from a transition in $T_1$ and in $T_2$. 
The resulting label is identical to the label that $[\cup]$ defines co-inductively.
Moreover, each element $\tau \in \Omega_{T_1 \times_\kappa T_2}(q_1,q_2)$ contains infinitely many labels from $T_1$ and from $T_2$, due to the assumption that a TES has an increasing an non-Zeno sequence of time stamps.

We therefore use $\widetilde{\Omega}_{T_1 \times_\kappa T_2}(q_1,q_2)$ to denote the set of such pairs.
We then prove that, for all $(q_1,q_2) \in Q_1\times Q_2$, $(\tau_1, \tau_2) \in \widetilde{\Omega}_{T_1 \times_\kappa T_2}(q_1, q_2)$ implies that $(\tau_1, \tau_2) \in [\kappa](E_1, E_2)$ and $(\tau_1, \tau_2) \in \Omega_{T_1}(q_1) \times \Omega_{T_2}(q_2)$.

We prove that $\widetilde{\Omega}_{T_1 \times_\kappa T_2}(q_1,q_2) = \{(\tau_1, \tau_2) \in [\kappa](E_1, E_2) \mid (\tau_1, \tau_2) \in \Omega_{T_1}(q_1) \times \Omega_{T_2}(q_2) \}$ by showing forward and backward inclusion, i.e., point $1$ and point $2$ respectively.

\underline{Forward inclusion.} 
We know that since $(\tau_1, \tau_2) \in \widetilde{\Omega}_{T_1\times_\kappa T_2}(q_1, q_2)$, then there exists a post fixed point $R$ of $\Phi_{T_1 \times_\kappa T_2}$ such that $((\tau_1, \tau_2), (q_1, q_2)) \in R$. Let $\widetilde{R}$ be the set $\{ (\tau_1, \tau_2) \mid  \exists q. ((\tau_1, \tau_2),q) \in R\}$.
We show that $\widetilde{R}$ is a post fixed point of $[\kappa](E_1, E_2)$.
By definition of the transition relation $\to$ of $T_1 \times_\kappa T_2$, we have that if $(\tau_1, \tau_2) \in \widetilde{R}$, then $(\tau_1(0), \tau_2(0)) \in \kappa(E_1, E_2)$ and, if $t_1 < t_2$ then $(\tau_1' , \tau_2) \in \widetilde{R}$, if $t_2 < t_1$ then $(\tau_1, \tau_2') \in \widetilde{R}$, and if $t_1 = t_2$ then $(\tau_1',\tau_2') \in \widetilde{R}$. We thus conclude that $\widetilde{R} \subseteq \Phi_\kappa(\widetilde{R})$, and therefore $\widetilde{\Omega}_{T_1\times_\kappa T_2}(q_1,q_2) \subseteq [\kappa](E_1, E_2)$.

We then showed that, for any $\tau$ in the behavior of $\Inf{T_1 \times_{\kappa} T_2, (q_1,q_2)}$, there exist $\tau_1 \in \Inf{T_1,q_1}$ and $\tau_2 \in \Inf{T_2,q_2}$ such that $\tau = \tau_1[\cup] \tau_2$ and $(\tau_1, \tau_2) \in [\kappa](E_1, E_2)$.

\medskip

\underline{Backward inclusion.}
We show that, for all $(q_1,q_2) \in Q_1\times Q_2$, if $(\tau_1, \tau_2) \in [\kappa](E_1, E_2)$ and $(\tau_1, \tau_2) \in \Omega_{T_1}(q_1) \times \Omega_{T_2}(q_2)$ then $(\tau_1, \tau_2) \in \widetilde{\Omega}_{T_1 \times_\kappa T_2}(q_1, q_2)$.

To do so, we construct a post fixed points of $\Phi_\kappa$, which we denote $S$, such that $((\tau_1, \tau_2), (q_1, q_2)) \in S$; and, for all  $((\tau, \sigma), (q, s)) \in S$, we have $\tau(0) = (O_1, t_1)$ and $\sigma(0) = (O_2, t_2)$: 
\begin{itemize}
    \item if $t_1 < t_2$ and $(q_1, q_2) \xrightarrow{\tau(0)} (q_1', q_2)$ then $((\tau',\sigma), (q_1', q_2)) \in S$;
    \item if $t_2 < t_1$ and $(q_1, q_2) \xrightarrow{\sigma(0)} (q_1, q_2')$ then $((\tau,\sigma'), (q_1, q_2')) \in S$; and
    \item if $t_2 = t_1$ and $(q_1, q_2) \xrightarrow{\sigma(0)\cup \tau(0)} (q_1', q_2')$ then $((\tau',\sigma'), (q_1', q_2')) \in S$.
\end{itemize}
Since $(\tau_1, \tau_2) \in [\kappa](E_1, E_2)$, we know that $S \subseteq [\kappa](E_1, E_2)$.
We now show that $S \subseteq \Phi_{T_1 \times_\kappa T_2}(S)$.
The argument for $S$ being a post fixed point of $\Phi_{T_1 \times_\kappa T_2}$ follows from the definition of $S$.

    \qed
\end{proof}

We give in Example~\ref{ex:prod-TES-tr-sys} the TES transition systems resulting from the product of the TES transition systems of two robots and a grid.
Example~\ref{ex:prod-TES-tr-sys} defines operationally the components in Section 2.2, i.e., their behavior is generated by a TES transition system.
\begin{example}
    \label{ex:prod-TES-tr-sys}
    Let $T_{R_1}$, $T_{R_2}$ be two TES transition systems for robots $R_1$ and $R_2$, and let $T_G(\{1\},n,m)$ be a grid with robot $R_1$ alone and $T_G(\{1, 2\}, n,m)$ be a grid with robots $R_1$ and $R_2$.
    Let $E = E_1 \cup E_2 \cup E_G$ and $\sqcap \subseteq \Po(E) \times \Po(E)$ be such that $(O_1,O_2) \in \sqcap$ if and only if $O_1=O_2 \subseteq E_R \cap E_G$. 
    We use $\kappa_\sqcap^{\mathit{sync}}$ as defined in~\cite{DBLP:journals/corr/abs-2110-02214}.


    The product of $T_{R_1}$, $T_{R_2}$, and $T_G(\{1, 2\}, n, m)$ under $\kappa_\sqcap^\mathit{sync}$ is the TES transition system $T_{R_1}\times_{\kappa_\sqcap^\mathit{sync}} T_{R_2} \times_{\kappa_\sqcap^\mathit{sync}} T_G(\{1, 2\}, n, m)$ such that it synchronizes observations of the two robots with the grid, but does not synchronize events of the two robots directly, since the two set of events are disjoint.

    \hfill $\blacksquare$
\end{example}

As a consequence of Theorem 1, letting $\kappa_\sqcap^{\mathit{sync}}$ be the composability relation used in the product $\bowtie$ and writing $T = T_{R_1} \times_{\kappa_\sqcap^{\mathit{sync}}} T_{R_2} \times_{\kappa_\sqcap^{\mathit{sync}}} T_G$, $C_{T}(q1, q2, q3)$ is equal to the component $C_{T_{R_1}}(q_1) \bowtie C_{T_{R_2}}(q_2) \bowtie C_{T_G}(q_3)$

Let $T$ be a TES transition system, and let $C_T(q) = (E, \Inf{T,q})$ be a component whose behavior is defined by $T$.
Then, $C$ is \emph{deadlock free} if and only if $\FG{\Inf{T,q}} = \Fin{T,q}$.

A class of deadlock free components is that of components that accept arbitrary insertion of $\emptyset$ observables in between two observations. We say that such component is \emph{prefix-closed}, as every sequence of finite observations can be continued by an infinite sequence of empty observables, i.e., $C$ is such that $C = C^*$ (as defined after Definition~\ref{def:prod}).
We say that a TES transition system $T$ is prefix-closed if and only if and only if $C_T(q) = C^*_T(q)$.
For instance, if $T$ is such that, for any state $q$ and for any $t\in\Rp$ there is a transition $q\xrightarrow{(\emptyset,t)}q$, then $T$ is prefix-closed.
\begin{lemma}
    \label{lemma:prefix-closed-prod}
    If $T_1$ and $T_2$ are prefix-closed, then $T_1\times_{\kappa_\sqcap^\ssync} T_2$ is prefix-closed where $\sqcap\subseteq \Po(E_1)\times \Po(E_2)$ with $(O_1,O_2)\in\sqcap$ implies $O_1\not = \emptyset$ and $O_2 \not = \emptyset$.
\end{lemma}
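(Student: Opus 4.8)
The plan is to prove the equivalent statement that, for every state $(q_1,q_2)$, the behaviour $L := \Inf{T_1 \times_{\kappa_\sqcap^\ssync} T_2,(q_1,q_2)}$ satisfies $L = L^*$; since $L \subseteq L^*$ always holds, only $L^* \subseteq L$ needs an argument. Write $\kappa = \kappa_\sqcap^\ssync$. By Theorem~\ref{thm:correctness}, $L = \{\sigma_1 [\cup] \sigma_2 \mid \sigma_1 \in \Inf{T_1,q_1},\ \sigma_2 \in \Inf{T_2,q_2},\ (\sigma_1,\sigma_2)\in[\kappa](E_1,E_2)\}$, so it suffices to fix $\sigma = \sigma_1 [\cup] \sigma_2 \in L$, an index $n \in \N$, and the truncation $\tau$ with $\tau \sim_n \sigma$ and $z := \tau^{(n)} \in \TES{\emptyset}$, and exhibit witnesses placing $\tau$ back in $L$. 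The witnesses come from truncating $\sigma_1,\sigma_2$ at the point where they stop contributing to $\sigma[n]$: let $k_1,k_2 \in \N$ be the unique indices for which $\sigma[n]$ is the interleaving of $\sigma_1[k_1]$ and $\sigma_2[k_2]$ (well defined since $[\cup]$ is deterministic and time stamps strictly increase; note that every time stamp occurring in $\sigma_1[k_1]$ or $\sigma_2[k_2]$ is at most the last time stamp of $\sigma[n]$, hence strictly below every time stamp of $z$). Put $\tau_1 = \sigma_1[k_1]\cdot z$ and $\tau_2 = \sigma_2[k_2]\cdot z$.

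I would then verify three facts and conclude $\tau = \tau_1 [\cup] \tau_2 \in L$ via Theorem~\ref{thm:correctness}: (i) $\tau_i \in \Inf{T_i,q_i}$; (ii) $(\tau_1,\tau_2)\in[\kappa](E_1,E_2)$; and (iii) $\tau_1 [\cup] \tau_2 = \tau$. Item~(i) is immediate from the hypothesis: $\tau_i$ is a truncate-then-pad-with-empties variant of $\sigma_i \in \Inf{T_i,q_i}$, and prefix-closedness of $T_i$ means exactly $\Inf{T_i,q_i} = (\Inf{T_i,q_i})^*$, so $\tau_i$ lies in it. Item~(iii) is a direct computation with the three cases defining $[\cup]$: on the first $n$ positions it holds by the choice of $k_1,k_2$, and beyond them $\tau_1$ and $\tau_2$ share the suffix $z$ with identical, strictly increasing time stamps, so $[\cup]$ merges them position-wise into $(\emptyset \cup \emptyset, t) = (\emptyset, t)$, reproducing $z$.

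The core is item~(ii), for which I would use the coinduction principle: produce $\Rel$ with $(\tau_1,\tau_2) \in \Rel \subseteq \Phi_\kappa(\Rel)$. Take $\Rel$ to be the set of all pairs reachable from $(\tau_1,\tau_2)$ by iterating the deterministic ``compare heads, advance the stream(s) carrying the smaller time stamp'' step underlying $\Phi_\kappa$; by construction $\Rel$ is closed under that step, so the only thing to check for $\Rel \subseteq \Phi_\kappa(\Rel)$ is that every pair in $\Rel$ has $\kappa$-composable heads. Because $\sigma_1[k_1]$ and $\sigma_2[k_2]$ carry exactly the time stamps of the matching prefixes of $\sigma_1,\sigma_2$, and all time stamps of $z$ are strictly larger, the first $n$ unfolding steps of $(\tau_1,\tau_2)$ run through pairs $(\sigma_1^{(a)}[k_1-a]\cdot z,\ \sigma_2^{(b)}[k_2-b]\cdot z)$ indexed by exactly the index pairs $(a,b)$ visited while forming $\sigma_1 [\cup] \sigma_2$; for each such pair the heads are $\sigma_1(a),\sigma_2(b)$, and $(\sigma_1(a),\sigma_2(b)) \in \kappa(E_1,E_2)$ because $[\kappa](E_1,E_2)$ is a fixed point of $\Phi_\kappa$ and $(\sigma_1,\sigma_2)\in[\kappa](E_1,E_2)$, so every derivative pair along this very chain stays in $[\kappa]$. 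After $n$ steps the unfolding sits at $(z,z)$, and thereafter at $(z^{(j)},z^{(j)})$, where both heads equal $(\emptyset,t_j)$. Here the hypothesis on $\sqcap$ enters: since no pair in $\sqcap$ has an empty first component, $\ind_\sqcap(\emptyset,E_2)$ holds, so the second clause of the definition of $\kappa_\sqcap^\ssync$ gives $((\emptyset,t_j),(\emptyset,t_j)) \in \kappa(E_1,E_2)$; with equal time stamps the required successor $(z^{(j+1)},z^{(j+1)})$ lies again in $\Rel$. Hence $\Rel \subseteq \Phi_\kappa(\Rel)$ and $(\tau_1,\tau_2)\in[\kappa](E_1,E_2)$.

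I expect the only real difficulty to be the bookkeeping in item~(ii): making precise that the index pairs reached by unfolding $(\tau_1,\tau_2)$ coincide with those along the $[\cup]$-merge of $\sigma_1,\sigma_2$ (so their heads are composable by virtue of $(\sigma_1,\sigma_2)\in[\kappa]$), and handling the boundary step at $m=n$ where both streams simultaneously switch to the tail $z$. The conceptual point — that every $\sqcap$-pair has non-empty components — is what makes the empty observable $\sqcap$-independent of everything, so that two empty observations at the same time stamp always compose and the all-empty tails can be threaded through the product; technically this is a one-line check against the definition of $\kappa_\sqcap^\ssync$.
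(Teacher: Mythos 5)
Your overall strategy is the natural elaboration of the paper's own (two-line) argument: the paper justifies the lemma purely by the observation that, since no pair in $\sqcap$ has an empty component, $\emptyset$ is $\ind_\sqcap$-independent of every observable, so silent observations always compose under $\kappa^\ssync_\sqcap$. Your items (i) and (iii) and the tail of item (ii) are fine and rest on exactly that observation.

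There is, however, a concrete gap in item (ii) at the boundary, which you flag but then mischaracterize. You assert that the first $n$ unfolding steps of $(\tau_1,\tau_2)$ all have heads $\sigma_1(a),\sigma_2(b)$ and that ``after $n$ steps'' both streams \emph{simultaneously} switch to the tail $z$. In general this is false: the prefixes $\sigma_1[k_1]$ and $\sigma_2[k_2]$ are exhausted at different unfolding steps (e.g.\ $\sigma_1$ may contribute only to $\sigma(0)$ while $\sigma_2$ contributes to all of $\sigma[n]$). In the intermediate regime, say $a=k_1$ and $b<k_2$, the pair under consideration is $(z,\ \sigma_2^{(b)}[k_2-b]\cdot z)$ with heads $(\emptyset,t^z_0)$ and $\sigma_2(b)=(O_2,t_2)$ where $t_2 < t^z_0$. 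Composability of this pair is covered neither by your ``heads are $\sigma_1(a),\sigma_2(b)$, hence in $\kappa$ because $(\sigma_1,\sigma_2)\in[\kappa]$'' argument nor by your $(z^{(j)},z^{(j)})$ case, and it does \emph{not} follow from $\ind_\sqcap(\emptyset,E_2)$ alone: the relevant clause of $\kappa^\ssync_\sqcap$ with an independent \emph{first} component requires $t_1\le t_2$, which fails here, and the reciprocal clause requires $\ind_\sqcap(E_1,O_2)$, which is not assumed.

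The step can be repaired, but it needs an extra argument you do not give: because $\kappa^\ssync_\sqcap$ is the \emph{smallest} set closed under its defining clauses, a pair $((O_1,t_1),(O_2,t_2))\in\kappa^\ssync_\sqcap(E_1,E_2)$ with $t_2<t_1$ can only have been produced by the reciprocal clause, which forces $\ind_\sqcap(E_1,O_2)$. In the original merge the index pair $(k_1,b)$ is visited with heads $\sigma_1(k_1)$ and $\sigma_2(b)$ and $t_2(b)<t_1(k_1)$ (otherwise $\sigma_1$ would advance past $k_1$ within $\sigma[n]$), so $(\sigma_1^{(k_1)},\sigma_2^{(b)})\in[\kappa]$ yields $\ind_\sqcap(E_1,O_2)$; the reciprocal clause then gives $((\emptyset,t^z_0),(O_2,t_2))\in\kappa^\ssync_\sqcap(E_1,E_2)$ since $t_2\le t^z_0$. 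With this inserted (and its symmetric twin), your relation $\Rel$ is indeed a post-fixed point of $\Phi_\kappa$ and the proof closes. Note that the paper's own proof is only a sketch and does not address this boundary either; your write-up is more explicit about where the difficulty lies, but as stated the boundary claim is wrong and the missing minimality argument is the actual content needed there.
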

\begin{proof}
    The proof follows from the fact that $\emptyset$ is independent with any non-empty observable $O\subseteq E_1 \cup E_2$.
    Then, any pair of silent observation is composable, and therefore the following TES transition system is prefix-closed.
    \qed
\end{proof}

We search for the condition under which deadlock freedom is preserved under a product.
Section 3.3 gives a condition for the product of two deadlock free components to be deadlock free.

\subsection{Compatibility of components}

Informally, the condition of $\kappa$-compatibility from two TES transition systems $T_1$ and $T_2$ translates the existence of a relation $\mathcal{R}$ on pairs of states of $T_1$ and $T_2$ such that $(q_1, q_2) \in \R$ and for every state $(q, s) \in \mathcal{R}$, every outgoing transition from $T_1$ (reciprocally $T_2$) has a transition in $T_2$ that composes under $\kappa$.
In all cases, the pair of outgoing states is in the relation $\mathcal{R}$.

Formally, a TES transition system $T_1 = (Q_1, E_1, \to_1)$ from state $q_1$ is $\kappa$-compatible with a TES transition system $T_2 = (Q_2, E_2, \to_2)$ from state $q_2$, and we say $(T_1, q_1)$ is $\kappa$-compatible with $(T_2, q_2)$, if there exists a relation $\R \subseteq Q_1 \times Q_2$ such that, $(q_1,q_2) \in \R$ and for any $(p_1, p_2) \in \R$, 
\begin{itemize}
    \item there exist $p_1 \xrightarrow{(O_1, t_1)}_1 r_1$  and $p_2 \xrightarrow{(O_2, t_2)}_2 r_2$ such that $((O_1, t_1),(O_2, t_2)) \in \kappa(E_1, E_2)$; and
    \item for all $p_1 \xrightarrow{(O_1, t_1)}_1 r_1$ and $p_2 \xrightarrow{(O_2, t_2)}_2 r_2$ if $((O_1, t_1),(O_2, t_2)) \in \kappa(E_1, E_2)$ then $(u_1, u_2) \in \R$, where $u_i = r_i$ if $t_i = \min\{t_1, t_2\}$, and $u_i = p_i$ otherwise for $i \in \{1,2\}$.
\end{itemize}

In other words, if $(T_1,q_1)$ is $\kappa$-compatible with $(T_2,q_2)$, 
then all transitions in $T_1$ starting in $q_1$ eventually match some composable transitions in $T_2$ starting in $q_2$.
If $(T_2,q_2)$ is $\kappa$-compatible to $(T_1,q_1)$ as well, then we say that $(T_1,q_1)$ and $(T_2,q_2)$ are $\kappa$-compatible.


\begin{theorem}[Deadlock free]
    Let $(T_1,q_1)$ and $(T_2,q_2)$ be $\kappa$-compatible. Let $C_{T_1}(q_1)$ and $C_{T_2}(q_2)$ be deadlock free.
    Then, $C_{T_1}(q_1) \times_{([\kappa],[\cup])} C_{T_2}(q_2)$ is deadlock free.
\end{theorem}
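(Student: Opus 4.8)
The plan is to lean on Theorem~\ref{thm:correctness} first: since $C_{T_1}(q_1) \times_{([\kappa],[\cup])} C_{T_2}(q_2) = C_{T_1 \times_\kappa T_2}(q_1,q_2)$, it suffices to show that the single TES transition system $T = T_1 \times_\kappa T_2$ is deadlock free from $(q_1,q_2)$, i.e.\ $\Fin{T,(q_1,q_2)} \subseteq \FG{\Inf{T,(q_1,q_2)}}$ (the reverse inclusion always holds, as recorded in the Remark on deadlock). Unfolding the definitions, this reduces to: every finite run $(q_1,q_2) \xrightarrow{\sigma[n]} (p_1,p_2)$ of $T$ can be prolonged into an infinite run whose label is a genuine (non-Zeno) TES.

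The core of the argument is an invariant: \emph{every state $(p_1,p_2)$ reachable from $(q_1,q_2)$ in $T$ lies in the witnessing relation $\R \subseteq Q_1 \times Q_2$ supplied by $\kappa$-compatibility of $(T_1,q_1)$ and $(T_2,q_2)$}. I would prove this by induction on the length of the product run. The base case is $(q_1,q_2)\in\R$. For the step, note that \emph{every} transition $(r_1,r_2) \to (u_1,u_2)$ of $T$ is, by the definition of $\times_\kappa$, built from a pair $r_1 \xrightarrow{(O_1,t_1)}_1 v_1$, $r_2 \xrightarrow{(O_2,t_2)}_2 v_2$ with $((O_1,t_1),(O_2,t_2)) \in \kappa(E_1,E_2)$; case-splitting on $t_1<t_2$, $t_2<t_1$, $t_1=t_2$, the target $(u_1,u_2)$ is exactly the pair $(w_1,w_2)$ with $w_i = v_i$ when $t_i = \min\{t_1,t_2\}$ and $w_i = r_i$ otherwise, which is precisely the pair that the second condition in the definition of $\kappa$-compatibility keeps inside $\R$; hence $(u_1,u_2)\in\R$. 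Alongside this I would record that deadlock freedom is inherited by reachable states: a finite run of $T_i$ from $p_i$, prefixed by the projection $q_i \to^* p_i$ of the product run, is a finite run from $q_i$, so by deadlock freedom at $q_i$ it extends to an infinite run, whose tail is an infinite run from $p_i$; thus $C_{T_1}(p_1)$ and $C_{T_2}(p_2)$ are deadlock free at every reachable $(p_1,p_2)$.

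With the invariant in hand, the first condition of $\kappa$-compatibility guarantees that every reachable state of $T$ has at least one outgoing transition, so any finite run can be extended indefinitely one step at a time. It remains to arrange the extension so that the interleaved time stamps are increasing and non-Zeno — and this is exactly where the deadlock freedom of $C_{T_1}(p_1)$ and $C_{T_2}(p_2)$ (not just the absence of a deadlock state) does real work: deadlock freedom supplies, from each reachable $p_i$, an infinite non-Zeno run of $T_i$, so whenever the product keeps advancing only $T_1$ its time stamps must eventually climb past the current time of $T_2$, forcing a step that advances $T_2$ (and symmetrically); hence both sides progress cofinally and the time stamps diverge. Packaging this co-inductively, one exhibits a post-fixed point of $\phi_{T}$ containing $((\tau_1[\cup]\tau_2),(p_1,p_2))$ for suitably chosen composable $\tau_1\in\Inf{T_1,p_1}$, $\tau_2\in\Inf{T_2,p_2}$, which by Theorems~\ref{thm:semantics} and~\ref{thm:correctness} places the finite run reaching $(p_1,p_2)$ as a left factor of an element of $\Inf{T,(q_1,q_2)}$, completing the missing inclusion.

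I expect the main obstacle to be this last point: showing that the greedily prolonged product run can always be steered to be non-Zeno — equivalently, that from a $\kappa$-compatible pair of deadlock-free states one can actually select composable \emph{infinite} behaviours rather than merely arbitrarily long finite ones. The combinatorial half (reachable $\Rightarrow$ in $\R$ $\Rightarrow$ has an outgoing transition) is routine bookkeeping on the product rules; the fairness/non-Zeno half, which genuinely needs both that each component's full behaviour is deadlock free and that $\kappa$ acts step-wise, is the delicate part.
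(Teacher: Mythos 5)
Your proposal is correct and follows essentially the same route as the paper: the paper's entire argument (phrased as a contradiction) is that the compatibility witness $\R$ covers every reachable pair of states and therefore supplies an outgoing composable transition there, which precludes a deadlocked state in $T_1 \times_\kappa T_2$ — exactly your ``combinatorial half.'' You are in fact more explicit than the paper, which silently assumes both the induction showing reachable product states remain in $\R$ and the non-Zeno/timing concern you flag at the end.
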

\begin{proof}
    We reason by contradiction.
    If the product  $C_{T_1}(q_1) \times_{([\kappa],[\cup])} C_{T_2}(q_2)$ 
    is not deadlock free, then 
    $\Fin{T_1\times_\kappa T_2, (q_1,q_2)} \not = \FG{\Inf{T_1\times_\kappa T_2, (q_1,q_2)}}$.
    Thus, there exists a state $(s,q)$, reachable from $(q_1, q_2)$, such that 
    $\Fin{T_1\times_\kappa T_2, (s,q)} = \emptyset$.
    Given the fact that both TES transition systems are deadlock free, and given that $s$ (respectively $q$) is reachable from $q_1$ (respectively $q_2$) for $T_1$ (respectively $T_2$), then 
    $\Fin{T_2, q_1} \not= \emptyset$ and
    $\Fin{T_1, q_2} \not= \emptyset$.

    Since $(T_1,q_1)$ and $(T_2,q_2)$ are $\kappa$-compatible, then there exists $\Rel$ such that for each pair $(s,q) \in \Rel$, there exists an outgoing transition in $T_1$ and $T_2$ from $s$ and $q$ respectively that is composable under $\kappa$.
Such property would imply that there is a transition in $T_1 \times_\kappa T_2$ from state $(s,q)$ and therefore 
    $\Fin{T_1\times_\kappa T_2, (s,q)} \not= \emptyset$.
    In other words, the property of compatibility contradicts the presence of deadlock in the product $C_{T_1}(q_1) \times_{([\kappa],[\cup])} C_{T_2}(q_2)$.
    \qed
\end{proof}

\begin{lemma}
    If $T_1 = (Q_1, E_1, \rightarrow_1)$ and $T_2 = (Q_1, E_1, \rightarrow_1)$ are deadlock free and are such that $\sqcap \cap (\Po(E_1)\times \Po(E_2)) = \emptyset$, then $(T_1, q_1)$ is $\kappa_\sqcap^\mathit{sync}$-compatible with $(T_2, q_2)$ for arbitrary $(q_1, q_2)\in Q_1\times Q_2$.
\end{lemma}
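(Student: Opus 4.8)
The plan is to build a witnessing relation $\R \subseteq Q_1 \times Q_2$ for compatibility, the whole argument resting on one observation: when the event sets are $\sqcap$-disjoint, \emph{every} pair of observations, one over $E_1$ and one over $E_2$, is composable under $\kappa_\sqcap^{\mathit{sync}}$. First I would rephrase the hypothesis $\sqcap \cap (\Po(E_1)\times\Po(E_2)) = \emptyset$ as $\ind_\sqcap(E_1,E_2)$, which by the downward-closed definition of $\ind_\sqcap$ also yields $\ind_\sqcap(O_1,E_2)$ for every $O_1\subseteq E_1$ and $\ind_\sqcap(E_1,O_2)$ for every $O_2\subseteq E_2$. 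Then, for any $O_1\subseteq E_1$, $O_2\subseteq E_2$, and $t_1,t_2\in\Rp$, I claim $((O_1,t_1),(O_2,t_2))\in\kappa_\sqcap^{\mathit{sync}}(E_1,E_2)$: if $t_1\le t_2$, this is an instance of the second defining clause of $\kappa_\sqcap^{\mathit{sync}}$ applied with $\ind_\sqcap(O_1,E_2)$ and $X_2 = O_2$; if $t_2\le t_1$, it is an instance of its reciprocal applied with $\ind_\sqcap(E_1,O_2)$ and $X_1 = O_1$; since $\Rp$ is totally ordered, one of the two cases always applies. Hence any transition of $T_1$ and any transition of $T_2$ carry composable labels.

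Next I would take $\R = \{(p_1,p_2) \mid p_1 \text{ reachable from } q_1 \text{ in } T_1,\ p_2 \text{ reachable from } q_2 \text{ in } T_2\}$, so that $(q_1,q_2)\in\R$. For the first clause of compatibility, fix $(p_1,p_2)\in\R$. Since $C_{T_1}(q_1)$ is deadlock free and $p_1$ is reachable from $q_1$, the state $p_1$ has an outgoing transition $p_1\xrightarrow{(O_1,t_1)}_1 r_1$ — otherwise the finite run reaching $p_1$ could not be prolonged to an infinite one, contradicting $\FG{\Inf{T_1,q_1}} = \Fin{T_1,q_1}$ — and likewise $p_2\xrightarrow{(O_2,t_2)}_2 r_2$ for $T_2$; by the previous paragraph these two labels are composable under $\kappa_\sqcap^{\mathit{sync}}(E_1,E_2)$. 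For the second clause, given $(p_1,p_2)\in\R$ and any transitions $p_1\xrightarrow{(O_1,t_1)}_1 r_1$ and $p_2\xrightarrow{(O_2,t_2)}_2 r_2$ (whose labels are necessarily composable), the successor pair $(u_1,u_2)$ prescribed by the definition keeps, in each coordinate $i$, either $p_i$ (already reachable) or $r_i$ (reachable from $q_i$ by prolonging the run to $p_i$ with that transition); so in each of the three orderings of $t_1$ and $t_2$ both coordinates of $(u_1,u_2)$ are reachable, whence $(u_1,u_2)\in\R$. This shows $(T_1,q_1)$ is $\kappa_\sqcap^{\mathit{sync}}$-compatible with $(T_2,q_2)$; the same relation transposed witnesses the symmetric statement, so the two are in fact $\kappa_\sqcap^{\mathit{sync}}$-compatible.

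The only delicate point is the appeal to deadlock freedom for the existence of an outgoing transition at every reachable state; I would justify it exactly as in the proof of the Deadlock-free theorem, using that $\FG{\Inf{T_i,q_i}}\subseteq\Fin{T_i,q_i}$ holds always and is strict precisely when some state reachable from $q_i$ has no outgoing transition. The remaining ingredients — the downward closure of $\ind_\sqcap$, the case split on whether $t_1\le t_2$ or $t_2\le t_1$, and the closure of the reachable-pairs relation under successors — are routine.
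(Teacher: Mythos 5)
Your proof is correct and follows the same route as the paper's (much terser) argument: the hypothesis makes every pair of observations over $E_1$ and $E_2$ composable via the independence clauses of $\kappa^{\mathit{sync}}_\sqcap$, and deadlock freedom supplies the outgoing transitions needed for the first clause of compatibility. You merely make explicit the details the paper leaves implicit, namely the reachable-pairs relation $\R$ and the case split on $t_1 \le t_2$ versus $t_2 \le t_1$.
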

\begin{proof}
    If the relation $\sqcap$ does not relate any observables, then any pair of observations is composable under $\kappa^\ssync_\sqcap$.
    As a consequence, any sequence of transitions is allowed.
    \qed
\end{proof}
\begin{lemma}
    If $T_1$ and $T_2$ are prefix-closed, then $(T_1,q_1)$ and $(T_2,q_2)$ are $\kappa^\ssync_\sqcap$-compatible for arbitrary $q_1\in Q_1$, $q_2\in Q_2$, and $\sqcap \subseteq \Po(E_1)\times \Po(E_2)$ with $(O_1,O_2)\in\sqcap$ implies $O_1 \not = \emptyset$ and $O_2 \not = \emptyset$.
\end{lemma}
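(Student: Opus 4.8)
The plan is to exhibit directly the witnessing relation $\R \subseteq Q_1 \times Q_2$ demanded by the definition of $\kappa^{\ssync}_{\sqcap}$-compatibility, and to take it to be the ``reachability square''
\[
  \R = \{\, (p_1, p_2) \mid p_1 \text{ is reachable from } q_1 \text{ in } T_1 \text{ and } p_2 \text{ is reachable from } q_2 \text{ in } T_2 \,\}.
\]
This makes $(q_1,q_2)\in\R$ immediate (each state is reachable from itself). It then remains only to check, for every $(p_1,p_2)\in\R$, the two clauses in the definition of $\kappa^{\ssync}_{\sqcap}$-compatibility.

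The closure clause (the second bullet) is free from the choice of $\R$: if $p_1 \xrightarrow{(O_1,t_1)}_1 r_1$ and $p_2 \xrightarrow{(O_2,t_2)}_2 r_2$ are composable under $\kappa^{\ssync}_{\sqcap}$, then the successor pair $(u_1,u_2)$ satisfies $u_i \in \{p_i, r_i\}$ for each $i$; since $p_i$ is reachable from $q_i$ and $r_i$ lies one transition beyond $p_i$, $u_i$ is again reachable from $q_i$, so $(u_1,u_2)\in\R$. No further argument is needed here.

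The existence clause (the first bullet) is the only place prefix-closedness is used. First I would record its operational consequence: from every state $p$ reachable from the start state, $T$ has an outgoing transition $p \xrightarrow{(\emptyset, t_p)} p'$ labelled by the empty observable at some time $t_p\in\Rp$ --- this is exactly the ``arbitrary insertion of $\emptyset$-observations'' that characterizes prefix-closedness (and is trivial when $T$ carries an $\emptyset$-self-loop at each state, by the sufficient condition stated before Lemma~\ref{lemma:prefix-closed-prod}; in general it follows from $C_T(p)=C^*_T(p)$ forcing the all-empty TESs into $\Inf{T,p}$). Applying this to $p_1$ in $T_1$ and $p_2$ in $T_2$ yields transitions with labels $(\emptyset,t_{p_1})$ and $(\emptyset,t_{p_2})$. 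I then claim $((\emptyset,t_{p_1}),(\emptyset,t_{p_2})) \in \kappa^{\ssync}_{\sqcap}(E_1,E_2)$: the hypothesis that $(O_1,O_2)\in\sqcap$ implies $O_1\neq\emptyset$ and $O_2\neq\emptyset$ says precisely that $\emptyset$ is $\sqcap$-independent of everything, i.e.\ both $\ind_\sqcap(\emptyset,E_2)$ and $\ind_\sqcap(E_1,\emptyset)$ hold; feeding this into the second clause of the definition of $\kappa^{\ssync}_{\sqcap}$ --- with $O_1=\emptyset$ if $t_{p_1}\le t_{p_2}$ and, by the reciprocal, with $O_2=\emptyset$ if $t_{p_2}\le t_{p_1}$ --- gives composability no matter how the two timestamps compare. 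This supplies the required composable pair of outgoing transitions, completing the verification; and since $\R$ is symmetric in the roles of $T_1,q_1$ and $T_2,q_2$, the same relation witnesses $\kappa^{\ssync}_{\sqcap}$-compatibility in both directions, so $(T_1,q_1)$ and $(T_2,q_2)$ are $\kappa^{\ssync}_{\sqcap}$-compatible.

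I expect the only delicate point to be the first step of the last paragraph: making precise the passage from the semantic statement $C_T = C^*_T$ to the operational statement ``each reachable state has an outgoing $\emptyset$-labelled transition'', since a priori the timestamp of such a transition may be constrained by the state. This turns out to be harmless here precisely because the composability argument above tolerates the two empty observations carrying different timestamps; alternatively one can simply reuse the observation underlying Lemma~\ref{lemma:prefix-closed-prod} that any pair of silent observations is composable under $\kappa^{\ssync}_{\sqcap}$. Everything else is routine bookkeeping.
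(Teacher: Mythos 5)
Your argument is correct in substance and is considerably more explicit than the paper's own proof, which consists entirely of the phrase ``Follows from Lemma~\ref{lemma:prefix-closed-prod}''. The key fact is the same in both: under the stated hypothesis on $\sqcap$, the empty observable is independent of everything, i.e.\ $\ind_\sqcap(\emptyset,E_2)$ and $\ind_\sqcap(E_1,\emptyset)$, so a pair of $\emptyset$-labelled observations is composable under $\kappa^\ssync_\sqcap$ whatever the relative order of their time stamps --- this is precisely the observation buried in the proof of Lemma~\ref{lemma:prefix-closed-prod} that the paper's one-liner implicitly reuses. What you add, and what the paper leaves entirely implicit, is the actual verification of the definition of $\kappa$-compatibility: an explicit witness relation $\R$ (the reachability square), the remark that the closure clause holds for free for that choice of $\R$, and the use of $\emptyset$-labelled transitions to discharge the existence clause. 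That is the right way to write this proof down, and symmetry of $\R$ indeed gives compatibility in both directions.

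The one soft spot is the point you flag yourself and then dismiss a little too quickly: your ``harmless'' remark disposes of the time-stamp issue but not of the prior question of whether every state reachable from $q_i$ actually has an outgoing $\emptyset$-labelled transition. Prefix-closedness as defined in the paper is the purely semantic condition $C_T(q)=C^*_T(q)$; it constrains the set of TESs generated from $q$ but says nothing about reachable states that lie on no infinite run, nor does it force the all-$\emptyset$ continuation of a prefix to be available from the \emph{particular} state your run happens to reach (nondeterminism may realize that continuation along a different branch). For such a state the existence clause of compatibility can fail even though $C_T(q)=C^*_T(q)$. This gap is inherited from the paper --- its own proof has exactly the same problem --- and it disappears if one adopts the operational reading the paper itself suggests just before Lemma~\ref{lemma:prefix-closed-prod}, namely an $(\emptyset,t)$-self-loop at every state for every $t$, or more generally restricts $\R$ to states from which an $\emptyset$-transition is guaranteed. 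Under that reading your argument is complete.
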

\begin{proof}
    Follows from Lemma~\ref{lemma:prefix-closed-prod}.
\end{proof}


The consequence of two TES transition systems $T_1$ and $T_2$ to be $\kappa$-compatible on $(q_1, q_2)$ and deadlock free, is that they can be run \emph{step-by-step} from $(q_1, q_2)$ (i.e., the product can be done at runtime), and the resulting behavior is an element of $\Inf{T_1\times_\kappa T_2, (q_1, q_2)}$.
In general however,  $\kappa$-compatibility is not preserved over product, demonstrated by Example~\ref{ex:counter-ex}.
For the case of coordinated cyber-physical systems, components are usually not prefix-closed as there might be some timing constraints or some mandatory actions to perform in a bounded time frame.
\begin{example}
    \label{ex:counter-ex}
    Suppose three TES transition systems $T_i = (\{q_i\},\{a,b,c,d\},\to_i)$, with $i \in \{1,2,3\}$, defined as follow for all $n \in \N$:
    \begin{itemize}
        \item $q_1 \xrightarrow{(\{a,b\},n)}_1 q_1$ and  $q_1 \xrightarrow{(\{a,c\},n)}_1 q_1$;
        \item $q_2 \xrightarrow{(\{a,c\},n)}_2 q_2$ and  $q_2 \xrightarrow{(\{a,d\},n)}_2 q_2$;
        \item $q_3 \xrightarrow{(\{a,d\},n)}_3 q_3$ and  $q_3 \xrightarrow{(\{a,b\},n)}_3 q_3$.
    \end{itemize}
    Let $\sqcap = \{(O,O) \mid O \subseteq \{a,b,c,d\}\}$. It is easy to show that $T_1(q_1)$, $T_2(q_2)$, and $T_3(q_3)$ are pairwise $\kappa^\ssync_\sqcap$-compatible.
    However, $T_1(q_1)$ is not $\kappa^\ssync_\sqcap$-compatible with $T_2(q_2)\times_\sqcap^\ssync T_3(q_3)$.

    \hfill$\blacksquare$
\end{example}

Note that, if $(T_1,q_1)$, $(T_2, q_2)$, and $(T_3, q_3)$ are pairwise $\kappa$-compatible TES transition systems, and if one of the following holds:
\begin{itemize}
    \item $T_1$, $T_2$, $T_3$ are all three prefix-closed TES transition systems;
    \item $\sqcap \cap (\Po(E_1)\times\Po(E_2) \cup  \Po(E_2)\times\Po(E_3) \cup \Po(E_1)\times\Po(E_3)) = \emptyset$; or
    \item $T_1$, $T_2$, $T_3$ are such that, for all of their state, there is one and only one outgoing transition.
\end{itemize}
    then $(T_1,q_1)$ and $(T_2 \times_\kappa T_3, (q_2,q_3))$ are pairwise $\kappa^\ssync_\sqcap$-compatible.

\section{Application: self-sorting robots} 
We implemented in Maude a framework to simulate concurrent executions of TES transition systems, where time stamps are restricted to natural numbers.
Using the description given in Example~\ref{ex:TES-tr-sys} for the grid and for robots, we add to their composition several protocols that aim at preventing deadlock.
The source for the implementation is accessible at~\cite{cp-agents} to reproduce the results of this section.

\paragraph{Components in Maude}
The implementation of TES transition systems in Maude focuses on a subset that has some properties.
First, TES transition systems in Maude have time stamps that range over the set of positive natural numbers $\N$. We do not implement components with real time. 

Second, TES transition systems are delay insensitive. This property encodes that arbitrary time may pass for each transition in the TES transition system. Then, a TES transition system that is delay insensitive is such that if $q \xrightarrow{(O,n)} p$, then $q \xrightarrow{(O,n+k)} p$ for arbitrary $k\in\N$.
We therefore write $q \xrightarrow{O} p$ to denote the set of transitions $q \xrightarrow{(O,n)} p$ for all $n \in \N$. 


In Maude, the state of a TES transition system component is represented by a term and the state of a composed system is a multiset of component states.   Transitions of the step-wise product are defined in terms of such system states.
For instance, the swap protocol between robot $R(3)$ and $R(1)$ is the following term in Maude:
{\tiny
\begin{lstlisting}
 [swap(R(3),R(1)): Protocol | k("state") |-> ds(q(0)); false; null]
 \end{lstlisting}
 }
 where swap(R(3),R(1)) is the name of the component; Protocol is its class; k(``state'') maps to the initial state of the protocol q(0); ``false'' denotes the status of the protocol; and ``null'' is the set of transitions that the protocol may take.

\paragraph{Runtime composition.}
The product of TES transition systems is constructed at runtime, step by step.
We fix a composability relation $\kappa_\sqcap^\mathit{sync}$ to be such that, for all $O \subseteq \mathbb{E}$, $(O,O) \in \sqcap$.
We use $\kappa_\sqcap^\mathit{sync}$ for the product of TES transition systems.

Given a list of initialized TES transition system, the runtime computes the set of all possible composite transitions, from which transitions that violate the composability relation $\kappa_\sqcap^\mathit{sync}$ are filtered out, and one transition that is composable is non-deterministically chosen.
\begin{algorithm}
    \caption{Runtime composition}
    \begin{algorithmic}[1]
            \Require 
            \Statex   - $n$ initialized TES transition systems $S = \{T_1(q_1), ..., T_n(q_n)\}$ 
            \Statex   - composability relation $\sqcap$ over observables
            \Procedure{RuntimeComposition}{}
            \For{$T_i(q_i) \in S$} 
            \State add $\{q_i\xrightarrow{O_i}_i p_i \mid p_i \in Q_i\}$ to $\textit{Tr}$
            \EndFor
            \While{$trs_i, trs_j \in \textit{Tr}$}
            \For{$q_i\xrightarrow{O_i}p_i \in trs_i$ and $q_j \xrightarrow{O_j}p_j \in trs_j$}
            \If {$((O_i, 1), (O_j, 2)) \in \kappa^\mathit{sync}_\sqcap(E_i, E_j)$}
            \State  add $(q_i,q_j)\xrightarrow{O_i}(p_i,q_j)$ to $\textit{trs}_{ij}$ 
            \EndIf
            \If {$((O_i, 2), (O_j, 1)) \in \kappa^\mathit{sync}_\sqcap(E_i, E_j)$}
            \State  add $(q_i,q_j)\xrightarrow{O_j}(q_i,p_j)$ to $\textit{trs}_{ij}$ 
            \EndIf
            \If {$((O_i, 1), (O_j, 1)) \in \kappa^\mathit{sync}_\sqcap(E_i, E_j)$}
            \State  add $(q_i,q_j)\xrightarrow{O_i\cup O_j}(p_i,p_j)$ to $\textit{trs}_{ij}$ 
            \EndIf
            \EndFor
            \State $\textit{Tr} := (\textit{Tr} \setminus\{\textit{trs}_i, \textit{trs}_i\}) \cup \{\textit{trs}_{ij}\}$
            \EndWhile
            \State let $trs \in \textit{Tr}$
            \State let $(q_1,..., q_n) \xrightarrow{O} (r_1, ...., r_n) \in trs$
            \For{$i \leq n$}
            \State   $T_i(q_i) \Rightarrow T_i(r_i)$
            \EndFor
            \EndProcedure
    \end{algorithmic}
    \label{algorithm:comp}
\end{algorithm}

Algorithm~\ref{algorithm:comp} shows the procedure RuntimeComposition that corresponds to a one step product of the input TES transition systems. Note that such procedure applied recursively on its results would generate a behavior that is in behavior of the product of the TES transition systems.

\paragraph{Results.}
Initially, the system consists of three \emph{trolls}, with identifiers $\mathit{id(0)}$, $\mathit{id(1)}$, and $\mathit{id(2)}$, each coordinated by two protocols $\mathit{swap(id(i),id(j))}$ with $i,j \in \{0,1,2\}$ and $j<i$. The trolls move on a grid and trolls $\mathit{id(0)}$, $\mathit{id(1)}$, and $\mathit{id(2)}$ are respectively initialized at position $(2;0)$, $(1;0)$, and $(0;0)$. 
\footnote{We refer to~\cite{clavel-etal-07maudebook} for a more detailed description of the Maude framework.}
 The property $P_\sorted$ is a reachability property on the state of the grid, that states that \emph{eventually, all robots are in the sorted position}. In Maude, we express such reachability property with the following search command:
 \begin{lstlisting}
 search [1] init =>* 
    [sys::Sys  [ field : Field | k((0;0)) |-> d(id(0)), 
                k((1;0)) |-> d(id(1)), 
                k((2;0)) |-> d(id(2)) ; true ; null]] .
  \end{lstlisting}

\newcommand{\battery}{\mathit{battery-out}}
Table~\ref{table:results} features three variations on the sorting problem.
The first system is composed of robots whose move are free on the grid. 
The second adds one battery for each component, whose energy level decreases for each robot move.
The third system adds a swap protocol for every pair of two robots.
The last system adds protocol and batteries to compose with the robots.

We record, for each of those systems, whether the sorted configuration is reachable ($P_\sorted$), and if all three robots can run out of energy ($P_\battery$). 
\begin{table}
    \caption{Evaluation of different systems for several behavioral properties.}
    \label{table:results}
    \def\arraystretch{1.5}
    \setlength{\tabcolsep}{0.5em}
    \centering
    \begin{tabular}{l|c|c}
        System &$P_\sorted$& $P_b$     \\
        \hline
        $\underset{0\leq i< 2}{\bowtie} R_i \bowtie G$                               &$12.10^3$ states, 25s, $31.10^6$ rewrites& - \\
        $\underset{0\leq i< 2}{\bowtie} (R_i \bowtie B_i) \bowtie G$                               &$12.10^3$ states, 25s, $31.10^6$ rewrites& true \\
        $\underset{0\leq i< 2}{\bowtie} R_i \bowtie G \underset{0\leq i<j<2}{\bowtie} S(R_i,R_j)$      &$8250$ states, 44s, $80.10^6$ rewrites & - \\
        $\underset{0\leq i< 2}{\bowtie} (R_i \bowtie B_i) \bowtie G \underset{0\leq i<j<2}{\bowtie} S(R_i,R_j)$      &$8250$ states, 71s, $83.10^6$ rewrites & false \\
    \end{tabular}
\end{table}
Observe that the reachability query returns a solution for both system: the one with and without protocols.
However, the time to reach the first solution increases as the number of states and transition increases (adding the protocol components). We leave as future work some optimization to improve on our results.


%

%
%
%
%
%
%
%
%
%
%
%
%
%
%
%
%
%
%
%
%
%
%
%

\section{Related work}

\paragraph{Discrete Event Systems}
Our work represents both cyber and physical aspects of system with a unified model of discrete event systems.
In~\cite{doi:10.1146/annurev-control-053018-023659}, the author lists the current challenges in modelling cyber-physical in such a way. 
The author points to the problem of modular control, where even though two modules run without problems in isolation, the same two modules may block when they are used in conjunction.

In~\cite{DBLP:journals/tac/SampathLT98}, the authors present procedures to synthesize supervisors that control a set of interacting processes and, in the case of failure, report a diagnosis. 
Cyber-physical systems have also been studied from an actor-model perspective, where actors interact through
events~\cite{Talcott2008,DBLP:journals/scp/KappeLAT19}.
In our work, we add to the event structure a timing constraint, and expose condition to take the product of discrete event systems at runtime.


\paragraph{Synchronization of processes}
In~\cite{Nivat1982}, the author describes infinite behaviors of process and their synchronization. Notably, the problem of non-blockingness is stated: if two processes eventually interact on some actions, how to make sure that both processes will not block each others. The concept of centrality of a process is introduced, which corresponds to our definition of deadlock freedom. Our work extends this work in that it gives sufficient conditions to compute at runtime, using a step-wise product, some behaviors in the synchronous product.


\paragraph{Components}
In~\cite{AR02}, the authors give a co-inductive definition of components, to which \cite{DBLP:journals/corr/abs-2110-02214} is an extension. 
In~\cite{BAIER200675}, the authors propose a state based specification as constraint automata. A transition in a constraint automata is labeled by a guarded command, whose satisfaction depends on the context of its product (other constraint automata). Except from~\cite{KOKASH201311}, constraint automata do not have time as part of their semantics (i.e., only specify time insensitive components), and only describe observables on ports. In that respect, our model extends constraint automata by generalizing the set of possible observables, and adding the time of the observable as part of the transition.


%

\paragraph{Timed systems} 
In~\cite{DBLP:journals/tcs/FiadeiroL17}, the authors use
heterogeneous timed asynchronous relational nets (HT-ARNs) to model timed sensitive components, and a specification as timed IO-automata.
The authors show some conditions (progress-enabledness and r-closure) for the product of two HT-ARNs to preserve progress-enabledness.
We may have recover a similar result, but with some modifications. Our product is more expressive: $\kappa$ needs not be only synchronization of shared events, but can have more intricate coordination~\cite{DBLP:journals/corr/abs-2110-02214} (e.g., exclusion of two events).
We do not necessitate our process to be $r$-closed, and in general, we do not want to explicitly write the silent observations.

\section{Conclusion}
We introduced an operational specification of cyber-physical components introduced in~\cite{DBLP:journals/corr/abs-2110-02214}, for which we gave sufficient conditions for some products to be correctly implemented at runtime.
As a consequence, we implemented our framework in Maude, and use our correctness result to show that our analysis are sound.
We finally proved an emergent sorting property our of a set of robots whose moves on a grid are coordinated by a local swapping protocol.

\paragraph{Acknowledgement}
Talcott was partially supported by the U. S. Office of Naval Research under award numbers N00014-15-1-2202 and N00014-20-1-2644, and NRL grant N0017317-1-G002.
Arbab was partially supported by the U. S. Office of Naval Research under award number N00014-20-1-2644.

\printbibliography
\newpage
\section{Appendix}

We recall a list of results for showing algebraic properties of product on components given properties of composability relation on observables. 

\begin{property}
    We list a series of properties on a composability relation on observable $\sqcap$ where, for $O_1, O_2, O_3 \subseteq \E$:
    \begin{enumerate}
        \item\label{sq1}
            if $\ind_\sqcap(O_1, O_2)$ and $\ind_\sqcap(O_1, O_3)$ then $\ind_\sqcap(O_1, O_2 \cup O_3)$; and
            if $\ind_\sqcap(O_2, O_3)$ and $\ind_\sqcap(O_1, O_3)$ then $\ind_\sqcap(O_1\cup O_2, O_3)$
        \item\label{sq2} if $(O_1, O_2) \in \sqcap$ and $(O_2, O_3) \in \sqcap$, then $(O_1, O_3) \in \sqcap$;
        \item\label{sq3} if $(O_1, O_2) \in \sqcap$ and $(O_1, O_3) \in \sqcap$, then $(O_2, O_3) \in \sqcap$ and $(O_3, O_2) \in \sqcap$;
        \item\label{sq4} if $(O_1, O_3) \in \sqcap$ and $(O_2, O_3) \in \sqcap$, then $(O_1, O_2) \in \sqcap$ and $(O_2, O_1) \in \sqcap$;
        \item\label{sq5}
            if $(O_1, O_2) \in \sqcap$ and $(O_1',O_2') \in \sqcap$, then $(O_1\cup O_1', O_2 \cup O_2') \in \sqcap$.
    \end{enumerate}
    \label{prop:sqcap}
\end{property}
\begin{corollary}
    \label{corollary:lifted-sync}
    Let $\sqcap$ be a composability relation on observables and $\kappa^\ssync_\sqcap$ the synchronous composability relation on observations.
    Then:
    \begin{itemize}
        \item if $\sqcap$ is symmetric, then the product $\times_{([\kappa^\ssync_\sqcap],[\cup])}$ is commutative;
        \item if $\sqcap$ satisfies Properties~\ref{prop:sqcap}, then the product $\times_{([\kappa^\ssync_\sqcap],[\cup])}$ is associative;
        \item if $\sqcap$ is co-reflexive, i.e., $(O_1,O_2) \in \sqcap$ implies $O_1 = O_2$,  then the product $\times_{([\kappa^\ssync_\sqcap],[\cup])}$ is idempotent.
    \end{itemize}
\end{corollary}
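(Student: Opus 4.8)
The plan is to derive all three parts from the abstract sufficient conditions established in~\cite{DBLP:journals/corr/abs-2110-02214} for a parametrized product $\times_{(R,\oplus)}$ on components to be commutative, associative, or idempotent. Each such condition is a statement purely about the pair $(R,\oplus)$: symmetry of $R$ together with commutativity of $\oplus$; a ternary ``zig-zag'' compatibility of $R$ together with associativity of $\oplus$; and a reflexivity-like property of $R$ together with $\sigma\oplus\sigma=\sigma$. Here $(R,\oplus)=([\kappa^\ssync_\sqcap],[\cup])$, so it suffices to verify these conditions under the stated hypotheses on $\sqcap$. The recurring technical device is the lifting of a property of $\kappa^\ssync_\sqcap$ on observations to the coinductively defined relation $[\kappa^\ssync_\sqcap]$ on TESs: one exhibits the ``transported'' candidate relation on TESs, checks that it is a post-fixed point of the appropriate $\Phi_{\kappa^\ssync_\sqcap}(E_1,E_2)$, and appeals to the greatest-post-fixed-point proof principle recalled in Section~2. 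I would isolate this lifting step once and reuse it in each part.

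\emph{Commutativity.} First I would observe that $\kappa^\ssync_\sqcap$ is symmetric whenever $\sqcap$ is: the defining clauses are already stated in reciprocal pairs, and symmetry of $\sqcap$ is exactly what turns $(O_1,O_2)\in\sqcap$ into $(O_2,O_1)\in\sqcap$ in the first clause, the $\ind_\sqcap$ side conditions being unaffected. Lifting then shows that $[\kappa^\ssync_\sqcap](E_1,E_2)^{-1}$ is a post-fixed point of $\Phi_{\kappa^\ssync_\sqcap}(E_2,E_1)$, hence $[\kappa^\ssync_\sqcap]$ is symmetric on TESs as well. Commutativity of $[\cup]$ follows because $\cup$ on observables is commutative and the three cases in the definition of $[\cup]$ are exchanged by swapping $t_1$ and $t_2$ (a one-line coinduction). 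The abstract commutativity criterion then applies.

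\emph{Associativity (the main obstacle).} Here I would have to establish, for composable triples, both $(\sigma_1 [\cup] \sigma_2) [\cup] \sigma_3 = \sigma_1 [\cup] (\sigma_2 [\cup] \sigma_3)$ and the ternary compatibility: $(\sigma_1,\sigma_2)\in[\kappa^\ssync_\sqcap]$ with $(\sigma_1 [\cup] \sigma_2,\sigma_3)\in[\kappa^\ssync_\sqcap]$ holds iff $(\sigma_2,\sigma_3)\in[\kappa^\ssync_\sqcap]$ with $(\sigma_1,\sigma_2 [\cup] \sigma_3)\in[\kappa^\ssync_\sqcap]$ (with matching composite interfaces). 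The strategy is to prove the observation-level analogue of this condition for $\kappa^\ssync_\sqcap$ on single observations and then lift it coinductively. At the observation level the whole content is bookkeeping: each head observable must be partitioned into the part that $\sqcap$-composes and the part that is $\ind_\sqcap$ with everything, and one must show these partitions are mutually consistent across the three observables while the relative order of the three time stamps is handled coherently by the ``$\min$'' appearing in the $\kappa^\ssync_\sqcap$ clauses. This is precisely where Properties~\ref{prop:sqcap} are consumed: item~(1) closes $\ind_\sqcap$ under unions, and items~(2)--(5) supply the transitivity/Euclidean closure of $\sqcap$ needed to recombine composing parts. Associativity of $[\cup]$ itself is a separate, easier coinduction on the same case split on the order of the head time stamps. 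I expect this case analysis — keeping the ``composing versus independent'' partition of every head observable synchronized across all three streams at every step — to be the bulk and the delicate part of the proof.

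\emph{Idempotency.} When $\sqcap$ is co-reflexive, the abstract idempotency criterion reduces to: (a) $(\sigma,\sigma)\in[\kappa^\ssync_\sqcap](E,E)$ for all $\sigma\in\TES{E}$; (b) $\sigma [\cup] \sigma = \sigma$; and (c) the collapse property that a composable pair over a common interface composes to an element already present — which under co-reflexivity holds because at each step the pair either synchronizes, and co-reflexivity forces equal observables at equal times, or has an independent head. For (a) I would note that every observation $(O,t)$ synchronizes with itself via the first clause, taking the $\sqcap$-composing part of $O$ as the common $O_1=O_2$ and the remainder as $O_1'=O_2'$, with co-reflexivity yielding the required independence; then $\{(\sigma,\sigma)\}$ is a post-fixed point of $\Phi_{\kappa^\ssync_\sqcap}(E,E)$. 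Part (b) is immediate from the equal-time case of $[\cup]$ together with $O\cup O=O$, by coinduction. Assembling (a)--(c) through the abstract criterion gives idempotency.
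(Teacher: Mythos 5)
The paper itself offers no proof of this corollary: it is stated as a direct instantiation of the general commutativity/associativity/idempotency criteria of the cited prior work, combined with the properties listed in Property~\ref{prop:sqcap}. Your overall plan --- verify the abstract conditions on the pair $([\kappa^\ssync_\sqcap],[\cup])$ and lift observation-level facts to TESs by exhibiting post-fixed points of $\Phi_{\kappa^\ssync_\sqcap}$ --- is exactly that intended route. Your commutativity argument is correct, and your associativity roadmap correctly identifies where each item of Property~\ref{prop:sqcap} is consumed, although there you only sketch the ternary head-observation case analysis rather than carry it out.

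The one step that would fail as written is item (a) of your idempotency argument. You claim $((O,t),(O,t))\in\kappa^\ssync_\sqcap(E,E)$ by splitting $O$ into ``the $\sqcap$-composing part'' $O_1$ and an independent remainder $O_1'$. But the first clause of $\kappa^\ssync_\sqcap$ requires the single pair $(O_1,O_1)\in\sqcap$, and co-reflexivity alone does not guarantee that the union of all $\sqcap$-composing subsets of $O$ is itself in $\sqcap$. Concretely, take $\sqcap=\{(\{a\},\{a\}),(\{b\},\{b\})\}$ and $O=\{a,b\}$: either choice $O_1\in\{\{a\},\{b\}\}$ leaves a remainder that is not $\ind_\sqcap$-independent of $E$, and $(\{a,b\},\{a,b\})\notin\sqcap$, so $((O,t),(O,t))\notin\kappa^\ssync_\sqcap(E,E)$ and hence $(\sigma,\sigma)\notin[\kappa^\ssync_\sqcap](E,E)$ for any TES whose head observable is $O$; the $\subseteq$ direction of $C\times C = C$ then breaks. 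To repair this you need $\sqcap$ to be closed under unions (item~\ref{sq5} of Property~\ref{prop:sqcap}) in addition to co-reflexive --- a hypothesis the corollary's third bullet omits but which the cited source presumably carries; you should make it explicit rather than rely on co-reflexivity alone.
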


\end{document}